\newtheorem{proposition}{Proposition}[section]
\newtheorem{lemma}[proposition]{Lemma}
\newtheorem{corollary}[proposition]{Corollary}
\newtheorem{theorem}[proposition]{Theorem}
\theoremstyle{definition}
\newtheorem{definition}[proposition]{Definition}
\theoremstyle{remark}
\newcommand{\thlabel}[1]{\label{th:#1}}
\newcommand{\thref}[1]{Theorem~\ref{th:#1}}
\newcommand{\selabel}[1]{\label{se:#1}}
\newcommand{\seref}[1]{Section~\ref{se:#1}}
\newcommand{\lelabel}[1]{\label{le:#1}}
\newcommand{\prlabel}[1]{\label{pr:#1}}
\newcommand{\prref}[1]{Proposition~\ref{pr:#1}}
\newcommand{\colabel}[1]{\label{co:#1}}
\newcommand{\coref}[1]{Corollary~\ref{co:#1}}
\newcommand{\delabel}[1]{\label{de:#1}}
\newcommand{\deref}[1]{Definition~\ref{de:#1}}
\author{K. Kanakoglou, Physics Department, Aristotle University \\
of Thessaloniki, Thessaloniki, 54124, GREECE, \\
kanakoglou@hotmail.com \and C. Daskaloyannis, Mathematics
Department, Aristotle University \\
of Thessaloniki, Thessaloniki, 54124, GREECE, \\
 daskalo@math.auth.gr}
\title{Paraboson quotients. A braided look at Green's
ansatz and a generalization \footnote{J. Math. Phys.
v.\textbf{48}, 113516, (2007)}}
\begin{document}

\maketitle

\begin{abstract}
Bosons and Parabosons are described as associative superalgebras,
with an infinite number of odd generators. Bosons are shown to be
a quotient superalgebra of Parabosons, establishing thus an even
algebra epimorphism which is an immediate link between their
simple modules. Parabosons are shown to be a super-Hopf algebra.
The super-Hopf algebraic structure of Parabosons, combined with
the projection epimorphism previously stated, provides us with a
braided interpretation of the Green's ansatz device and of the
parabosonic Fock-like representations. This braided interpretation
combined with an old problem leads to the construction of a
straightforward generalization of Green's ansatz.
\end{abstract}

\section{Introduction} \selabel{intro}

For a quantum system with a Hamiltonian of the form $H(p_{i},
q_{i})$ with possibly infinite degrees of freedom $i=1, 2, \ldots
\ $, the canonical variables $p_{i}, q_{i}$ are considered to be
generators of a unital associative non-commutative algebra,
described in terms of generators and relations by (we have set
$\hbar = 1$):
\begin{equation} \label{CCR}
[q_{i}, p_{j}] = i \ \delta_{ij} I \ \ \ \ \ \ \ \ \ \ [q_{i},
q_{j}] = [p_{i}, p_{j}] = 0
\end{equation}
$I$ is of course the unity of the algebra, $i,j = 1, 2, \ldots \ $
and $[x, y]$ stands for $xy-yx$. Relations \eqref{CCR} are known
in the physical literature as the Weyl algebra, or the
Heisenberg-Weyl algebra or more commonly as the Canonical
Commutation Relations often abbreviated as CCR.  For  technical
reasons it is common to use -instead of the variables $p_{i},
q_{i}$- the linear combinations:
\begin{equation} \label{creation-destruction op}
b_{j}^{+} = \frac{1}{\sqrt{2}}(q_{j} - ip_{j}) \ \ \ \ \ \ \ \
b_{j}^{-} = \frac{1}{\sqrt{2}}(q_{j} + ip_{j})
\end{equation}
for $j=1, 2, \ldots \ $ in terms of which \eqref{CCR} become:
\begin{equation} \label{CCRbose}
[b_{i}^{-}, b_{j}^{+}] =  \delta_{ij} I \ \ \ \ \ \ \ \ \ \
[b_{i}^{-}, b_{j}^{-}] = [b_{i}^{+}, b_{j}^{+}] = 0
\end{equation}
for $i,j = 1, 2, \ldots \ $. These latter relations are usually
called the bosonic algebra, and it is well known that this algebra
is the starting point for the second quantisation problem (in the
case of the free field theory) leading to the Bose-Einstein
statistics.

Parabosonic algebra grew out of the desire to generalize the
second quantization method -in the case of the the free field- in
a way permitting more general kind of statistics than the
Bose-Einstein statistics. It was formally introduced in terms of
generators and relations by Green in \cite{Green}, Greenberg and
Messiah in \cite{GreeMe} and Volkov in \cite{Vo}. It is generated
by the generators $B_{i}^{+}, B_{i}^{-}$ subject to the relations:
\begin{equation} \label{eq:PCR}
 \big[ \{ B_{i}^{\xi},  B_{j}^{\eta}\}, B_{k}^{\epsilon}  \big] -
 (\epsilon - \eta)\delta_{jk}B_{i}^{\xi} - (\epsilon -
 \xi)\delta_{ik}B_{j}^{\eta} = 0
\end{equation}
for all values of $i, j = 1, 2, ... \ $ and $ \xi, \eta, \epsilon
= \pm 1 \ $. The field theory statistics stemming from such an
algebra, has been known with the name ``parastatistics" and is
still a wide open subject of research. (see \cite{OhKa} and
references therein). \\

This paper consists logically of four parts: The first part
consists of \seref{1}. We state the definitions and derive the
$\mathbb{Z}_{2}$-grading of bosonic and parabosonic algebras (in
infinite degrees of freedom). Although the bosonic algebra is
usually considered to be a quotient of the UEA of the Heisenberg
Lie algebra with it's generators thus being even elements, we
adopt here a totaly different approach: We consider the bosonic
algebra to be an associative $\mathbb{Z}_2$-graded algebra
(associative superalgebra) with it's generators being odd
elements. This approach is reminiscent -although actually
different- of the antisymmetric Clifford-Weyl algebra (see for
example \cite{Pal3}). It permits us to express the bosonic algebra
as a quotient superalgebra of the parabosonic superalgebra. The
projection epimorphism from parabosons to bosons creates thus an
immediate link between irreducible representations of the bosonic
algebra and irreducible representations of the parabosonic
algebra.

The second part of the paper consists of \seref{11} and \seref{2}.
In \seref{11} the notions of $\mathbb{Z}_{2}$-graded algebra,
$\mathbb{Z}_{2}$-graded modules and $\mathbb{Z}_{2}$-graded tensor
products \cite{Che}, are reviewed as special examples of the more
general and modern notions of $\mathbb{G}$-module algebras
($\mathbb{G}$: a finite abelian group) and of braiding in monoidal
categories \cite{Mon, Maj2, Maj3}. The role of the non-trivial
quasitriangular structure of the $\mathbb{CZ}_{2}$ group Hopf
algebra or equivalently the role of the braiding of the category
${}_{\mathbb{CZ}_{2}}\mathcal{M}$ of $\mathbb{CZ}_{2}$ modules is
emphasized in order to make clear that similar constructions can
be carried out in the same way for more complicated gradings (like
for example $\mathbb{Z}_{2} \times \mathbb{Z}_{2}$-grading). In
\seref{2} the notion of the super-Hopf algebra is reviewed and the
super-Hopf algebraic structure of the parabosonic algebra is
established. Note that the proof, does not make use of the well
known \cite{Pal} Lie superalgebraic structure of the parabosonic
algebra (for the case of the finite degrees of freedom). We
conclude the paragraph with an indication of how the super-Hopf
algebraic structure can be used to form multiple tensor products
of (braided) representations of the original super-Hopf algebra.

The third part of the paper consists of \seref{Gr} and
\seref{brainterpGrAns} and deals with the representation theory of
the bosonic and parabosonic algebras. After a review of well known
results in \seref{Gr}, we apply in \seref{brainterpGrAns} the
techniques developed in the previous sections to the case of the
braided representations of the parabosonic super-Hopf algebra. The
Green ansatz algebras are shown to be isomorphic to braided tensor
product algebras constructed from multiple copies of the bosonic
algebra. The $\mathbb{Z}_{2}$-grading of the bosonic algebra,
established in \seref{1} plays an essential role in the proof.
Furthermore, the Fock-like representations of the parabosonic
algebra -which were classified by Greenberg in \cite{GreeMe}- are
shown to be braided ($\mathbb{Z}_{2}$-graded), irreducible
(simple) submodules of the multiple tensor product module of the
first Fock-like representation.

Finally, the last part of this paper consists of
\seref{Generalizations} alone. We combine the isomorphism stated
in \seref{brainterpGrAns} with some partial solutions of an old
problem of Ohnuki and Kamefuchi (see \cite{OhKa}): the
construction of \emph{self contained sets of commutation
relations} or \emph{commutation relations specific to given order
$p$}. As a result we construct a straightforward generalization of
the Green's ansatz device. Furthermore our construction indicates
how a whole family of similar generalizations can be constructed,
provided we have (although we don't) the general solution of
Ohnuki's and Kamefuchi's problem. Speaking about generalizations
of the Green's ansatz, one should also see \cite{Ques} for a
different approach.

 In what follows, all vector spaces and algebras and all
tensor products will be considered over the field of complex
numbers. Whenever the symbol $i$ enters a formula in another place
than an index, it always denotes the imaginary unit $i^{2} = -1$.
Furthermore, whenever formulas from physics enter the text, we use
the traditional convention: $\hbar = m = \omega = 1$. Finally, the
Sweedler's notation for the comultiplication is freely used
throughout the text.

\section{Bosons, Parabosons and superalgebra quotients} \selabel{1}

The parabosonic algebra, was originally defined in terms of
generators and relations by Green \cite{Green} and
Greenberg-Messiah \cite{GreeMe}. We begin with restating their
definition: \\
 Let us consider the vector space $V_{X}$ freely
generated by the elements: $X_{i}^{+}, X_{j}^{-}$, $i,j=1, 2,
...$. Let $T(V_{X})$ denote the tensor algebra of $V_{X}$.
$T(V_{X})$ is - up to isomorphism - the free algebra generated by
the elements of the basis. In $T(V_{X})$ we consider the two-sided
ideal $I_{B}$, generated by the following elements:
\begin{equation} \label{eq:bdef}
\begin{array}{ccccc}
 \big[ X_{i}^{-}, X_{j}^{+} \big] - \delta_{ij}I_{X} & , & \big[ X_{i}^{-}, X_{j}^{-} \big]
 & , & \big[ X_{i}^{+},  X_{j}^{+} \big]
   \\
\end{array}
\end{equation}
for all values of $i,j=1, 2, ... \ $. $ \ I_{X}$ is the unity of
the tensor algebra. We now have the following:
\begin{definition} \delabel{bosons}
The bosonic algebra  $B$ is the quotient algebra of the tensor
algebra $T(V_{X})$ with the ideal $I_{B}$:
$$
B = T(V_{X}) / I_{B}
$$
\end{definition}
We denote by $\pi_{B}: T(V_{X}) \rightarrow B \ $ the canonical
projection. The elements $X_{i}^{+}$, $X_{j}^{-}$, $I_{X}$, where
$i,j=1, 2, ... \ $ and $I_{X}$ is the unity of the tensor algebra,
are the generators of the tensor algebra $T(V_{X})$. The elements
$\pi_{B}(X_{i}^{+}), \ \pi_{B}(X_{j}^{-}), \ \pi_{B}(I_{X}) \ $,
for all values $ \ i,j=1, 2, ...$ are a set of generators of the
bosonic algebra $B$, and they will be denoted by $b_{i}^{+},
b_{j}^{-}, I$ for $i,j=1, 2, ...$ respectively, from now on.
$\pi_{B}(I_{X}) = I$ is the unity of the bosonic algebra. In the
case of the finite degrees of freedom $i, j = 1, 2, ... m$ we have
the bosonic algebra in $2m$ generators ($m$ boson algebra) which
we will denote by $B^{(m)}$ from now on.

Returning again in the tensor algebra $T(V_{X})$ we consider the
two-sided ideal $I_{P_{B}}$, generated by the following elements:
\begin{equation} \label{eq:pbdef}
 \big[ \{ X_{i}^{\xi},  X_{j}^{\eta}\}, X_{k}^{\epsilon}  \big] -
 (\epsilon - \eta)\delta_{jk}X_{i}^{\xi} - (\epsilon - \xi)\delta_{ik}X_{j}^{\eta}
\end{equation}
respectively, for all values of $\xi, \eta, \epsilon = \pm 1$ and
$i,j=1, 2, ... \ $. $ \ I_{X}$ is the unity of the tensor algebra.
We now have the following:
\begin{definition} \delabel{parabosonsbosons}
The parabosonic algebra in $P_{B}$  is the quotient algebra of the
tensor algebra $T(V_{X})$ of $V_{X}$ with the ideal $I_{P_{B}}$:
$$
P_{B} = T(V_{X}) / I_{P_{B}}
$$
\end{definition}
We denote by $\pi_{P_{B}}: T(V_{X}) \rightarrow P_{B} \ $ the
canonical projection. The elements $ \pi_{P_{B}}(X_{i}^{+})$, $
\pi_{P_{B}}(X_{j}^{-})$,  $ \pi_{P_{B}}(I_{X}) \ $, for all values
$ \ i,j=1,...$ are a set of generators of the parabosonic algebra
$P_{B}$, and they will be denoted by $B_{i}^{+}, B_{j}^{-}, I$ for
$i,j=1, 2, ...$ respectively, from now on. $\pi_{P_{B}}(I_{X}) =
I$ is the unity of the parabosonic algebra. In the case of the
finite degrees of freedom $i, j = 1, 2, ... m$ we have the
parabosonic algebra in $2m$ generators ($m$ paraboson algebra)
which we will denote by $P_{B}^{(m)}$ from now on.

Based on the above definitions we prove now the following
proposition which clarifies the relationship between bosonic and
parabosonic algebras:
\begin{proposition} \prlabel{parabosonstobosons}
The parabosonic algebra $P_{B}$ and the bosonic algebra $B$ are
both $\mathbb{Z}_{2}$-graded algebras with their generators
$B_{i}^{\pm}$ and $b_{i}^{\pm}$ respectively, $i,j=1, 2, ...$,
being odd elements. The bosonic algebra $B$ is a quotient algebra
of the parabosonic algebra $P_{B}$. The ``replacement" map $\phi:
P_{B} \rightarrow B$ defined by: $\phi(B_{i}^{\pm}) = b_{i}^{\pm}$
is a $\mathbb{Z}_{2}$-graded algebra epimorphism (i.e.: an even
algebra epimorphism).
\end{proposition}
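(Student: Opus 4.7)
The plan is to work entirely in the tensor algebra $T(V_X)$ and then pass to the quotients using universal properties. Step one is to install a $\mathbb{Z}_{2}$-grading on $T(V_X)$ by reducing the natural tensor-degree $\mathbb{Z}$-grading modulo $2$, so that the free generators $X_i^{\pm}$ are declared odd. The proposition then reduces to three checks: that the defining ideals $I_B$ and $I_{P_B}$ are $\mathbb{Z}_{2}$-graded (homogeneous), that $I_{P_B}\subseteq I_B$, and that the resulting induced algebra map is surjective and parity-preserving.

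For the grading, I would simply inspect the generators of the two ideals given in \equref{bdef} and \equref{pbdef}. Every generator of $I_{B}$ is a sum of terms of tensor-degree $2$ and $0$, both even in $\mathbb{Z}_2$; every generator of $I_{P_B}$ is a sum of terms of tensor-degree $3$ and $1$, both odd in $\mathbb{Z}_2$. Therefore each ideal is generated by $\mathbb{Z}_2$-homogeneous elements, hence is a graded ideal, and the quotients $B=T(V_X)/I_B$ and $P_B=T(V_X)/I_{P_B}$ inherit $\mathbb{Z}_{2}$-gradings in which $b_i^{\pm}$ and $B_i^{\pm}$ are odd.

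The crux is the inclusion $I_{P_B}\subseteq I_B$, which I would prove by showing that every generator of $I_{P_B}$ already lies in $I_B$. The clean way is to use the purely formal associative-algebra identity
\begin{equation*}
\bigl[\{A,B\},C\bigr]=\bigl\{A,[B,C]\bigr\}+\bigl\{[A,C],B\bigr\},
\end{equation*}
applied to $A=X_i^{\xi}$, $B=X_j^{\eta}$, $C=X_k^{\epsilon}$. Modulo $I_B$ the commutator of any two $X$'s is a scalar multiple of $I_X$: one checks uniformly that $[X_j^{\eta},X_k^{\epsilon}]\equiv \tfrac{\epsilon-\eta}{2}\,\delta_{jk}I_X \pmod{I_B}$, since this formula gives the correct values $0,\pm\delta_{jk}I_X$ in all four sign cases $(\eta,\epsilon)$. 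Substituting back, the anticommutator with the remaining $X$ produces exactly $(\epsilon-\eta)\delta_{jk}X_i^{\xi}+(\epsilon-\xi)\delta_{ik}X_j^{\eta}$, so the paraboson generator \equref{pbdef} belongs to $I_B$.

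With $I_{P_B}\subseteq I_B$ in hand, the universal property of the quotient applied to $\pi_B\colon T(V_X)\to B$ yields a unique algebra homomorphism $\phi\colon P_B\to B$ with $\phi\circ\pi_{P_B}=\pi_B$; concretely $\phi(B_i^{\pm})=b_i^{\pm}$. Surjectivity is immediate since the $b_i^{\pm}$ and $I$ generate $B$ and lie in the image. Since $\phi$ sends odd generators of $P_B$ to odd generators of $B$ and is an algebra map, it preserves parity, i.e.\ is an even (graded) algebra epimorphism. I expect the only nontrivial step to be the identity-based check that the paraboson cubic relations hold in $B$; it becomes painless once the derivation identity above is used, avoiding a case-by-case expansion over the eight sign patterns.
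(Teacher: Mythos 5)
Your proof is correct and follows essentially the same route as the paper's: grade $T(V_X)$ by tensor-degree mod $2$, observe that both ideals are generated by homogeneous elements, establish $I_{P_B}\subseteq I_B$, and induce $\phi$ by the universal property of the quotient, with surjectivity and parity-preservation read off from the generators. The only difference is that where the paper merely asserts that the bosons ``straightforwardly'' satisfy the paraboson relations, you supply a clean uniform verification via the identity $[\{A,B\},C]=\{A,[B,C]\}+\{[A,C],B\}$ together with $[X_j^{\eta},X_k^{\epsilon}]\equiv\tfrac{\epsilon-\eta}{2}\,\delta_{jk}I_X \pmod{I_B}$; both the identity and the sign bookkeeping check out, so this is a valid (and tidier) way to carry out the one computation the paper leaves to the reader.
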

\begin{proof}
It is obvious that the tensor algebra $T(V_{X})$ is a
$\mathbb{Z}_{2}$-graded algebra with the monomials being
homogeneous elements. If $x$ is an arbitrary monomial of the
tensor algebra, then $deg(x) = 0$, namely $x$ is an even element,
if it constitutes of an even number of factors (an even number of
generators of $T(V_{X})$) and $deg(x) = 1$, namely $x$ is an odd
element, if it constitutes of an odd number of factors (an odd
number of generators of $T(V_{X})$). The generators $X_{i}^{+},
X_{j}^{-} \ $, $ \ i,j=1,...,n$ are odd elements in the above
mentioned gradation.
 In view of the above description we can easily conclude that the
$\mathbb{Z}_{2}$-gradation of the tensor algebra is immediately
``transfered" to the algebras $P_{B}$ and $B$: Both ideals
$I_{P_{B}}$ and $I_{B}$ are homogeneous ideals of the tensor
algebra, since they are generated by homogeneous elements of
$T(V_{X})$. Consequently, the projection homomorphisms
$\pi_{P_{B}}$ and $\pi_{B}$ are homogeneous algebra maps of degree
zero, or we can equivalently say that they are even algebra
homomorphisms.
 We can straightforwardly check that the bosons
satisfy the paraboson relations, i.e:
$$
\begin{array}{c}
  \pi_{B}( \big[ \{ X_{i}^{\xi},  X_{j}^{\eta}\},
X_{k}^{\epsilon}  \big] -
 (\epsilon - \eta)\delta_{jk}X_{i}^{\xi} - (\epsilon -
 \xi)\delta_{ik}X_{j}^{\eta} ) = \\
              \\
 = \big[ \{ b_{i}^{\xi},  b_{j}^{\eta}\},
b_{k}^{\epsilon}  \big] -
 (\epsilon - \eta)\delta_{jk}b_{i}^{\xi} - (\epsilon -
 \xi)\delta_{ik}b_{j}^{\eta} = 0 \\
\end{array}
$$
which simply means that: $ker(\pi_{P_{B}}) \subseteq ker(\pi_{B})$
or equivalently: $I_{P_{B}} \subseteq I_{B}$. By the correspodence
theorem for rings, we get that the set $I_{B} / I_{P_{B}} =
\pi_{P_{B}}(I_{B})$ is an homogeneous ideal of the algebra
$P_{B}$, and applying the third isomorphism theorem for rings we
get:
\begin{equation} \label{thirdisomortheor}
P_{B} \Big/ (I_{B} / I_{P_{B}}) = (T(V_{X}) / I_{P_{B}}) \Big/
(I_{B} / I_{P_{B}}) \cong T(V_{X}) \Big/ I_{B} = B
\end{equation}
Thus we have shown that the bosonic algebra $B$ is a quotient
algebra of the parabosonic algebra $P_{B}$. The fact that
$I_{P_{B}} \subseteq I_{B}$ implies that $\pi_{B}$ is uniquely
extended to an even algebra homomorphism $\phi: P_{B} \rightarrow
B$ according to the following (commutative) diagram:
\begin{displaymath}
\xymatrix{T(V_{X}) \ar[rr]^{\pi_{B}}
\ar[dr]_{\pi_{P_{B}}} & & B \\
 & P_{B} \ar@{.>}[ur]_{\exists ! \ \phi} & }
\end{displaymath}
where $\phi$ is completely determined by it's values on the
generators $B_{i}^{\pm}$ of $P_{B}$, i.e.: $\phi(B_{i}^{\pm}) =
b_{i}^{\pm}$. Recalling now that: $ker\phi = I_{B} / I_{P_{B}} =
\pi_{P_{B}}(I_{B})$ and using equation \eqref{thirdisomortheor},
we get the $\mathbb{Z}_{2}$-graded algebra isomorphism:
$$
B \cong P_{B} \Big/ ker\phi
$$
which completes the proof that $\phi$ is an epimorphism of
$\mathbb{Z}_{2}$-graded algebras (or: an even epimorphism).
\end{proof}
Note that $ker\phi$ is exactly the ideal of $P_{B}$ generated by
the elements of the form: $ \ \big[ B_{i}^{-}, B_{j}^{+} \big] -
\delta_{ij}I \ $, $ \ \big[ B_{i}^{-}, B_{j}^{-} \big] \ $, $ \
\big[ B_{i}^{+},  B_{j}^{+} \big] \ $ for all values of $i,j=1, 2,
...$, and $I$ is the unity of the $P_{B}$ algebra. We have an
immediate corollary:
\begin{corollary} \colabel{Parabreprfrombosrepr}
For any vector space $V$, any ${}_{B}V$ module of the bosonic
algebra $B$ immediately gives rise to a ${}_{P_{B}}V$ module of
the parabosonic algebra $P_{B}$ through the replacement map:
\begin{equation} \label{Parabactfrbosact}
B_{i}^{\pm} \cdot x = \phi(B^{\pm}) \cdot x = b_{i}^{\pm} \cdot x
\end{equation}
where $x$ is any element of $ \ V$. Furthermore, if ${}_{B}V$ is
an irreducible representation (a simple module) the fact that
$\phi$ is an epimorphism implies that ${}_{P_{B}}V$ is also an
irreducible representation (a simple module).
\end{corollary}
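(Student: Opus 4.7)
The plan is to realize the claimed ${}_{P_{B}}V$-structure as the pullback of the given ${}_{B}V$-structure along the even algebra epimorphism $\phi:P_{B}\to B$ produced in \prref{parabosonstobosons}. Concretely, if $\rho_{B}:B\to\mathrm{End}(V)$ denotes the algebra homomorphism defining the action of $B$ on $V$, I set $\rho_{P_{B}}:=\rho_{B}\circ\phi:P_{B}\to\mathrm{End}(V)$. Since $\phi$ and $\rho_{B}$ are both algebra homomorphisms, so is their composition, which immediately gives $V$ the structure of a left $P_{B}$-module. Evaluating $\rho_{P_{B}}$ on the generators reproduces exactly the formula in \eqref{Parabactfrbosact}, because $\phi(B_{i}^{\pm})=b_{i}^{\pm}$.

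For the irreducibility statement, I would show that every $P_{B}$-submodule $W$ of $V$ is automatically a $B$-submodule. This is immediate from the surjectivity of $\phi$: every $c\in B$ can be written as $c=\phi(a)$ for some $a\in P_{B}$, and then for $x\in W$ one has $c\cdot x=\phi(a)\cdot x=a\cdot x\in W$, so $W$ is stable under the full $B$-action. Hence if ${}_{B}V$ has no nontrivial invariant subspaces, neither does ${}_{P_{B}}V$, and simplicity is preserved. (Equivalently, one can argue that $\rho_{P_{B}}(P_{B})=\rho_{B}(\phi(P_{B}))=\rho_{B}(B)$ as subalgebras of $\mathrm{End}(V)$, so the two module structures have literally the same lattice of invariant subspaces.)

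I do not anticipate any real obstacle: the corollary is the standard ``pullback of modules along a surjective algebra map preserves simplicity'' principle, and the only input needed beyond general nonsense is the existence and surjectivity of the even algebra map $\phi$, which was already established. The grading plays no role here, which is why the statement is phrased for plain modules rather than for $\mathbb{Z}_{2}$-graded ones.
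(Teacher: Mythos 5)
Your proposal is correct and is exactly the argument the paper intends: the paper states this as an immediate corollary of \prref{parabosonstobosons} without writing out a proof, relying precisely on the standard pullback-along-an-epimorphism reasoning you give (composition $\rho_{B}\circ\phi$ defines the module, and surjectivity of $\phi$ makes the $P_{B}$- and $B$-submodule lattices coincide). Nothing is missing and no genuinely different route is taken.
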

At this point we should underline the difference between our
approach to the bosonic algebra (or: Weyl algebra or: CCR) as
stated in \deref{parabosonsbosons} and the more mainstream
approach which considers the bosonic algebra to be a quotient
algebra of the universal enveloping algebra of the Heisenberg Lie
algebra: The Heisenberg Lie algebra has -as a complex vector
space- the basis: $ \ b_{i}^{\pm}, c \ $ for all values $i = 1, 2,
... $ and it's Lie algebraic structure is determined by the
relations:
\begin{equation}
[b_{i}^{-}, b_{j}^{+}] =  \delta_{ij} c \ \ \ \ \ \ \ \ \ \
[b_{i}^{-}, b_{j}^{-}] = [b_{i}^{+}, b_{j}^{+}] = [b_{i}^{\pm}, c]
= 0
\end{equation}
If we denote the Heisenberg Lie algebra by $L_{H}$ and it's
universal enveloping algebra by $U(L_{H})$, then we can
immediately see that we have the associative algebra isomorphism:
\begin{equation} \label{WeylquotHeisenb}
B \ \cong \ U(L_{H}) \Big/ <c-I>
\end{equation}
which enables us to consider the bosonic algebra $B$ as a quotient
algebra of the universal enveloping algebra of the Lie algebra
$L_{H}$. By $<c-I>$ we denote the ideal of $U(L_{H})$ generated by
the element $c-I$, where $I$ is the unity of $U(L_{H})$. We should
mention here that the isomorphism stated in equation
\eqref{WeylquotHeisenb} is an algebra isomorphism but not a
$\mathbb{Z}_{2}$-graded algebra isomorphism, since in $U(L_{H})
\Big/ <c-I>$ the generators $b_{i}^{\pm}$ are considered to be
even (ungraded) elements, while according to our approach, in the
bosonic algebra $B$ the same generators are considered to be odd
elements. It is the $\mathbb{Z}_{2}$-grading that makes the whole
difference, and the reason for this point of view will become
clear in the sequel where we are going to discuss tensor products
and representations.

Let us stress here, that in the case of finite degrees of freedom,
our approach as expressed in \prref{parabosonstobosons} combined
with the well known results of \cite{Pal} indicates a Lie
superalgebraic rather than a Lie algebraic interpretation of the
bosonic algebra $B^{(m)}$: In \cite{Pal} Ganchev and Palev have
shown that the parabosonic algebra $P_{B}^{(m)}$ is isomorphic to
the universal enveloping algebra of the orthosymplectic Lie
superalgebra $osp(1/2m)$, with the isomorphism being a
superalgebra isomorphism (or equivalently: a
$\mathbb{Z}_{2}$-graded algebra isomorphism). This result combined
with \prref{parabosonstobosons} says that the $m$ boson algebra is
isomorphic to a quotient of $U(osp(1/2m))$:
\begin{equation} \label{bosquotpar}
B^{(m)} \cong U(osp(1/2m)) \Big/ ker \phi
\end{equation}
where $\phi$ is the replacement map $\phi : P_{B}^{(m)}
\rightarrow B^{(m)}$. We stress here that -unlike equation
\eqref{WeylquotHeisenb}- equation \eqref{bosquotpar} is a
superalgebra isomorphism and not merely an algebra isomorphism.

\section{Superalgebras, quasitriangularity and braiding} \selabel{11}

 The rise of the theory of quasitriangular Hopf algebras from the
mid-80's \cite{Dri} and thereafter and especially the study and
abstraction of their representations (see: \cite{Maj2, Maj3},
\cite{Mon} and references therein), has provided us with a novel
understanding of the notion and the properties of
$\mathbb{G}$-graded algebras,
where $\mathbb{G}$ is a finite abelian group:  \\
 Restricting ourselves to the simplest case
where $\mathbb{G} = \mathbb{Z}_{2}$, we recall that an algebra $A$
being a $\mathbb{Z}_{2}$-graded algebra (in the physics literature
the term superalgebra is also of widespread use) is equivalent to
saying that $A$ is a $\mathbb{CZ}_{2}$-module algebra, via the
$\mathbb{Z}_{2}$-action determined by: $1 \triangleright a = a$
and $g \triangleright a = (-1)^{|a|}a \ $ (for any $a$ homogeneous
in $A$ and $\ |a| \ $ it's degree). We denote by $1, g$ are the
elements of the $\mathbb{Z}_{2}$ group (written multiplicatively).
What we actually mean is that $A$, apart from being an algebra is
also a $\mathbb{CZ}_{2}$-module and at the same time it's
structure maps (i.e.: the multiplication and the unity map which
embeds the field into the center of the algebra) are
$\mathbb{CZ}_{2}$-module maps which is nothing else but
homogeneous linear maps of degree $0$ (i.e.: even linear maps).
Stated more generally, what we are actually saying is that the
$\mathbb{G}$-grading of $A$ can be equivalently described in terms
of a specific action of the finite abelian group $\mathbb{G}$ on
$A$, thus in terms of a specific action of the $\mathbb{CG}$ group
algebra on $A$. This is not something new. In fact such ideas
already appear in works such as \cite{CohMon} and \cite{Stee}.

 What is actually new and has been developed from the $90$'s
and thereafter \cite{Maj3}, \cite{Mon}, is the role of the
quasitriangularity of the $\mathbb{CG}$ group Hopf algebra (for
$\mathbb{G}$ a finite abelian group, see \cite{Scheu1}) and the
role played by the braiding of the $\mathbb{CG}$-modules in the
construction of the tensor products of $\mathbb{G}$-graded
objects. It is well known that for any group $\mathbb{G}$, the
group algebra $\mathbb{CG}$ equipped with the maps:
\begin{equation}
\begin{array}{ccccc}
  \Delta(g) = g \otimes g &  & \varepsilon(g) = 1 &  & S(g) = g^{-1} \\
\end{array}
\end{equation}
for any $g \in \mathbb{G}$, becomes a Hopf algebra. Focusing again
in the special case $\mathbb{G} = \mathbb{Z}_{2}$, we can further
summarize the description of the preceding paragraph, saying that
$A$ is an algebra in the braided monoidal category of
$\mathbb{CZ}_{2}$-modules ${}_{\mathbb{CZ}_{2}}\mathcal{M}$. In
this case the braiding is induced by the non-trivial
quasitriangular structure of the $\mathbb{CZ}_{2}$ Hopf algebra
i.e. by the non-trivial $R$-matrix:
\begin{equation} \label{eq:nontrivRmatrcz2}
R_{Z_{2}} = \frac{1}{2}(1 \otimes 1 + 1 \otimes g + g \otimes 1 -
g \otimes g)
\end{equation}
In the above relation $1, g$ are the elements of the
$\mathbb{Z}_{2}$ group (written multiplicatively).

 We digress here for a moment, to recall
that (see \cite{Maj2, Maj3} or \cite{Mon}) if $(H,R_{H})$ is a
quasitriangular Hopf algebra through the $R$-matrix $R_{H} = \sum
R_{H}^{(1)} \otimes R_{H}^{(2)}$, then the category of modules
${}_{H}\mathcal{M}$ is a braided monoidal category, where the
braiding is given by a natural family of isomorphisms $\Psi_{V,W}:
V \otimes W \cong W \otimes V$, given explicitly by:
\begin{equation} \label{eq:braid}
\Psi_{V,W}(v \otimes w) = \sum (R_{H}^{(2)} \vartriangleright w)
\otimes (R_{H}^{(1)} \vartriangleright v)
\end{equation}
for any $V,W \in obj({}_{H}\mathcal{M})$. By $v,w$ we denote any
 elements of $V,W$ respectively.   \\
Combining eq. \eqref{eq:nontrivRmatrcz2} and \eqref{eq:braid} we
immediately get the braiding in the
${}_{\mathbb{CZ}_{2}}\mathcal{M}$ category:
\begin{equation} \label{symmbraid}
\Psi_{V,W}(v \otimes w) = (-1)^{|v||w|} w \otimes v
\end{equation}
In the above relation $\ |.| \ $ denotes the degree of an
homogeneous element of either $V$ or $W$ (i.e.: $|x|=0$ if $x$ is
an even element and $|x|=1$ if $x$ is an odd element).
 This is obviously a symmetric braiding,
since $\Psi_{V,W} \circ \Psi_{W,V} = Id$, so we actually have a
symmetric monoidal category ${}_{\mathbb{CZ}_{2}}\mathcal{M}$,
rather than a truly braided one.

 The really important thing about the existence of the braiding
\eqref{symmbraid} is that it provides us with an alternative way
of forming tensor products of $\mathbb{Z}_{2}$-graded algebras: If
$A$ and $B$ are superalgebras with multiplications $m_{A}: A
\otimes A \rightarrow A$ and $m_{B}: B \otimes B \rightarrow B$
respectively, then the super vector space $A \otimes B$ (with the
obvious $\mathbb{Z}_{2}$-gradation) equipped with the associative
multiplication
\begin{equation} \label{braidedtenspr}
(m_{A} \otimes m_{B})(Id \otimes \Psi_{B,A} \otimes Id): A \otimes
B \otimes A \otimes B \longrightarrow A \otimes B
\end{equation}
 given equivalently by:
$$
(a \otimes b)(c \otimes d) = (-1)^{|b||c|}ac \otimes bd
$$
for $b,c$ homogeneous in $B, A$ respectively, readily becomes a
superalgebra (or equivalently an algebra in the braided monoidal
category of $\mathbb{CZ}_{2}$-modules
${}_{\mathbb{CZ}_{2}}\mathcal{M}$) which we will denote: $A
\underline{\otimes} B$ and call the braided tensor product algebra
from now on. Let us close this description with two important
remarks: First, we stress that in \eqref{braidedtenspr} both
superalgebras $A$ and $B$ are viewed as $\mathbb{CZ}_{2}$-modules
and as such we have $B \otimes A \cong A \otimes B$ through $b
\otimes c \mapsto (-1)^{|c||b|} c \otimes b$. Second we underline
that the tensor product \eqref{braidedtenspr} had been already
known from the past \cite{Che} but rather as a special possibility
of forming tensor products of superalgebras than as an example of
the more general conceptual framework of the braiding applicable
not only to superalgebras but to any $\mathbb{G}$-graded algebra
($\mathbb{G}$ a finite abelian group) as long as $\mathbb{CG}$ is
equipped with a non-trivial quasitriangular structure or
equivalently \cite{Mon}, \cite{Scheu1}, a bicharacter on
$\mathbb{G}$ is given.

For the case $A=B$ we get the braided tensor algebra $A
\underline{\otimes} A$. Of course we can form longer -braided-
tensor product algebras $A \underline{\otimes} \ldots
\underline{\otimes} A$ between any (finite) number of copies of
$A$. In computing the multiplication in the braided tensor product
algebra $A \underline{\otimes} \ldots \underline{\otimes} A$ the
only thing we have to keep in mind is that each time two copies of
$A$ have to interchange their order, the suitable sign factor (due
to the
braiding \eqref{symmbraid}) has to be inserted. \\
Finally, due to \prref{parabosonstobosons} we immediately check
that the above description fits perfectly the superalgebras
$P_{B}$ and $B$, and offers us a method for constructing the
braided tensor product algebras $P_{B} \underline{\otimes} \ldots
\underline{\otimes} P_{B}$ and $B \underline{\otimes} \ldots
\underline{\otimes} B$ respectively.
\\

A similar discussion applies to the case of
$\mathbb{Z}_{2}$-graded $A$-modules where $A$ is a
$\mathbb{Z}_{2}$-graded algebra. We recall that \cite{Che},
\cite{Scheu}, $V$ being a $\mathbb{Z}_{2}$-graded $A$-module means
that first of all $V$ is a $\mathbb{Z}_{2}$-graded vector space:
$V = V_{0} \oplus V_{1}$. Furthermore the $A$-action is such that
the even elements of $A$ act as homogeneous linear maps of degree
zero: $a \cdot V_{0} \subseteq V_{0}$ and $a \cdot V_{1} \subseteq
V_{1}$ for every $a \in A_{0}$ (even linear maps) and the odd
elements of $A$ act as homogeneous linear maps of degree one: $a
\cdot V_{0} \subseteq V_{1}$ and $a \cdot V_{1} \subseteq V_{0}$
for every $a \in A_{1}$ (odd linear maps).  \\
It is again the abstraction in the representation theory of the
quasitriangular Hopf algebras that provides us with an equivalent
description of the above mentioned ideas: We can equivalently say
that $V$ is a $\mathbb{CZ}_{2}$-module and at the same time an
$A$-module, such that the $A$ -action $\phi_{V} : A \otimes V
\rightarrow V$ is also a $\mathbb{CZ}_{2}$-module map. This is
equivalent to saying that $V$ is a (left) $A$-module in the
braided monoidal category of $\mathbb{CZ}_{2}$-modules
${}_{\mathbb{CZ}_{2}}\mathcal{M}$, or a braided $A$-module, where
the braiding is given by equation \eqref{symmbraid}. \\
The braiding \eqref{symmbraid}, provides us with an alternative
way of forming tensor products of braided modules (tensor product
of $\mathbb{Z}_{2}$-graded modules) through the same ``mechanism"
which led us to the braided tensor product algebra
\eqref{braidedtenspr}: If $A$, $B$ are algebras in
${}_{\mathbb{CZ}_{2}}\mathcal{M}$, $\ V$ is a (left) $A$-module in
${}_{\mathbb{CZ}_{2}}\mathcal{M}$, $\ W$ is a (left) $B$-module in
${}_{\mathbb{CZ}_{2}}\mathcal{M}$, via the actions:
$$
\begin{array}{ccccc}
  \phi_{V} : A \otimes V
\rightarrow V &   &   &   & \phi_{W} : B \otimes W
\rightarrow W \\
\end{array}
$$
respectively, then the vector space $V \otimes W$ becomes a (left)
$A \underline{\otimes} B$-module in
${}_{\mathbb{CZ}_{2}}\mathcal{M}$ via the action:
\begin{equation} \label{braidtensprodmod1}
\phi_{V \otimes W} = (\phi_{V} \otimes \phi_{W}) \circ (Id \otimes
\Psi_{B,V} \otimes Id) : A \underline{\otimes} B \otimes V \otimes
W \rightarrow V \otimes W
\end{equation}
given equivalently by:
$$
(a \otimes b) \cdot (v \otimes w) = (-1)^{|b||v|} a \cdot v
\otimes b \cdot w
$$
for $a, b, v, w \ $ homogeneous in $A, B, V, W$ respectively. See
also the discussion in \cite{Scheu}. Once more, we underline the
fact that in \eqref{braidtensprodmod1}, both the superalgebra $B$
and the superspace $V$ are considered as $\mathbb{CZ}_{2}$ modules
and as such: $B \otimes V \cong V \otimes B$ through $b \otimes v
\mapsto (-1)^{|b||v|}v \otimes b$, which is the
$\mathbb{Z}_{2}$-module isomorphism ``reflected" by the braiding
\eqref{symmbraid} which is the same thing as the $R_{Z_{2}}$
$R$-matrix \eqref{eq:nontrivRmatrcz2}.

\section{Super-Hopf algebraic structure of Parabosons: a braided
group}\selabel{2}

The notion of $\mathbb{G}$-graded Hopf algebra, for $\mathbb{G}$ a
finite abelian group, is not a new one neither in physics nor in
mathematics. The idea appears already in the work of Milnor and
Moore \cite{MiMo}, where we actually have $\mathbb{Z}$-graded Hopf
algebras. On the other hand, universal enveloping algebras of Lie
superalgebras are widely used in physics and they are examples of
$\mathbb{Z}_{2}$-graded Hopf algebras (see for example \cite{Ko},
\cite{Scheu}). These structures are strongly resemblant of Hopf
algebras but they are not Hopf algebras at least in the ordinary
sense.

Restricting again to the simplest case where $\mathbb{G} =
\mathbb{Z}_{2}$ we briefly recall this idea: An algebra $A$ being
a $\mathbb{Z}_{2}$-graded Hopf algebra (or super-Hopf algebra)
means first of all that $A$ is a $\mathbb{Z}_{2}$-graded
associative algebra (or: superalgebra). We now consider the
braided tensor product algebra $A \underline{\otimes} A$. Then $A$
is equipped with a coproduct
\begin{equation} \label{braidedcom}
\underline{\Delta} : A \rightarrow A \underline{\otimes} A
\end{equation}
which is an superalgebra homomorphism from $A$ to the braided
tensor product algebra  $A \underline{\otimes} A$ :
$$
\underline{\Delta}(ab) = \sum (-1)^{|a_{2}||b_{1}|}a_{1}b_{1}
\otimes a_{2}b_{2} = \underline{\Delta}(a) \cdot
\underline{\Delta}(b)
$$
for any $a,b$ in $A$, with $\underline{\Delta}(a) = \sum a_{1}
\otimes a_{2}$, $\underline{\Delta}(b) = \sum b_{1} \otimes
b_{2}$, and $a_{2}$, $b_{1}$ homogeneous. We emphasize here that
this is exactly the central point of difference between the
``super" and the ``ordinary" Hopf algebraic structure: In an
ordinary Hopf algebra $H$ we should have a coproduct $\Delta : H
\rightarrow H \otimes H$ which should be an algebra homomorphism
from $H$ to the usual tensor product algebra $H \otimes H$.

 Similarly, $A$ is equipped with an antipode $\underline{S} : A
\rightarrow A$ which is not an algebra anti-homomorphism (as in
ordinary Hopf algebras) but a  superalgebra anti-homomorphism (or:
``twisted" anti-homomorphism or: braided anti-homomorphism) in the
following sense (for any homogeneous $a,b \in A$):
\begin{equation} \label{braidedanti}
\underline{S}(ab) = (-1)^{|a||b|}\underline{S}(b)\underline{S}(a)
\end{equation}
The rest of the axioms which complete the super-Hopf algebraic
structure (i.e.: coassociativity, counity property, and
compatibility with the antipode) have the same formal description
as in ordinary Hopf algebras.

 Once again, the abstraction of the representation theory of
quasitriangular Hopf algebras provides us with a language in which
the above description becomes much more compact: We simply say
that $A$ is a Hopf algebra in the braided monoidal category of
$\mathbb{CZ}_{2}$-modules ${}_{\mathbb{CZ}_{2}}\mathcal{M}$ or: a
braided group where the braiding is given in equation
\eqref{symmbraid}. What we actually mean is that $A$ is
simultaneously an algebra, a coalgebra and a
$\mathbb{CZ}_{2}$-module, while all the structure maps of $A$
(multiplication, comultiplication, unity, counity and the
antipode) are also $\mathbb{CZ}_{2}$-module maps and at the same
time the comultiplication $\underline{\Delta} : A \rightarrow A
\underline{\otimes} A$ and the counit are algebra morphisms in the
category ${}_{\mathbb{CZ}_{2}}\mathcal{M}$ (see also \cite{Maj2,
Maj3} or \cite{Mon} for a more detailed description).
\\
We proceed now to the proof of the following proposition which
establishes the super-Hopf algebraic structure of the parabosonic
algebra $P_{B}$:
\begin{proposition} \prlabel{superHopfPb}
The parabosonic algebra  equipped with the even linear maps
$\underline{\Delta}: P_{B} \rightarrow P_{B} \underline{\otimes}
P_{B} \ \ $, $\ \ \underline{S}: P_{B} \rightarrow P_{B} \ \ $, $\
\ \underline{\varepsilon}: P_{B} \rightarrow \mathbb{C} \ \ $,
determined by their values on the generators:
\begin{equation} \label{eq:HopfPB}
\begin{array}{ccccc}
  \underline{\Delta}(B_{i}^{\pm}) = 1 \otimes B_{i}^{\pm} + B_{i}^{\pm} \otimes 1 &
  & \underline{\varepsilon}(B_{i}^{\pm}) = 0  & & \underline{S}(B_{i}^{\pm}) = - B_{i}^{\pm} \\
\end{array}
\end{equation}
for $i = 1, 2, \ldots \ $, becomes a super-Hopf algebra.
\end{proposition}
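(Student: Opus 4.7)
The plan is to exploit the freeness of the tensor algebra $T(V_X)$. I extend each of $\underline{\Delta}$, $\underline{\varepsilon}$, $\underline{S}$ from the generators $X_i^\pm$ to $T(V_X)$ via its universal property, then check that each descends to the quotient $P_B=T(V_X)/I_{P_B}$, and finally verify the super-Hopf axioms on generators. Concretely, $\underline{\Delta}$ is the unique superalgebra homomorphism $T(V_X)\to T(V_X)\underline{\otimes}T(V_X)$ (into the braided tensor product available by \seref{11}) with $X_i^\pm\mapsto 1\otimes X_i^\pm+X_i^\pm\otimes 1$; $\underline{\varepsilon}$ is the algebra morphism $T(V_X)\to\mathbb{C}$ killing each generator; and $\underline{S}$ is the superalgebra anti-homomorphism in the sense of \eqref{braidedanti} with $X_i^\pm\mapsto -X_i^\pm$. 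All three are even by construction.

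The heart of the argument is showing that each extended map sends, modulo the appropriate kernel, the defining paraboson cubic $R_{ijk}^{\xi\eta\epsilon}:=\big[\{X_i^\xi,X_j^\eta\},X_k^\epsilon\big]-(\epsilon-\eta)\delta_{jk}X_i^\xi-(\epsilon-\xi)\delta_{ik}X_j^\eta$ to zero. For $\underline{\Delta}$ the crucial mechanism is the braiding sign $(-1)^{|b||c|}$ in $(a\otimes b)(c\otimes d)=(-1)^{|b||c|}ac\otimes bd$: because the generators are odd, the cross-terms in $\underline{\Delta}(X_i^\xi)\underline{\Delta}(X_j^\eta)+\underline{\Delta}(X_j^\eta)\underline{\Delta}(X_i^\xi)$ cancel, collapsing $\underline{\Delta}(\{X_i^\xi,X_j^\eta\})$ to $1\otimes\{X_i^\xi,X_j^\eta\}+\{X_i^\xi,X_j^\eta\}\otimes 1$ (the even anticommutator is ordinarily primitive). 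A second round of cancellations then yields $\underline{\Delta}(R_{ijk}^{\xi\eta\epsilon})=1\otimes R_{ijk}^{\xi\eta\epsilon}+R_{ijk}^{\xi\eta\epsilon}\otimes 1$, which vanishes upon projection to $P_B\underline{\otimes}P_B$. For $\underline{\varepsilon}$ the vanishing is immediate, and for $\underline{S}$ the anti-morphism property reverses a triple product of odd generators without an overall sign, so $\underline{S}(R_{ijk}^{\xi\eta\epsilon})=-R_{ijk}^{\xi\eta\epsilon}$, again vanishing modulo $I_{P_B}$.

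With descent established, the super-Hopf axioms (coassociativity, counity, and the antipode condition) reduce to identities on the generating set $\{B_i^\pm,I\}$, since in each axiom both sides are compositions of (braided) algebra or anti-algebra morphisms out of $P_B$, and two such morphisms agreeing on generators must coincide. On the primitive generators each axiom is transparent; for instance $(\underline{\Delta}\otimes\mathrm{id})\underline{\Delta}(B_i^\pm)=(\mathrm{id}\otimes\underline{\Delta})\underline{\Delta}(B_i^\pm)=1\otimes 1\otimes B_i^\pm+1\otimes B_i^\pm\otimes 1+B_i^\pm\otimes 1\otimes 1$. The main obstacle is the bookkeeping in the coproduct descent step above: correctly tracking the braiding signs through the cubic paraboson relation. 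Once those cancellations are verified, the descent of $\underline{S}$ and $\underline{\varepsilon}$ and the axiom checks on primitive generators are routine.
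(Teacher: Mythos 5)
Your proposal is correct and follows essentially the same route as the paper: extend $\underline{\Delta}$, $\underline{\varepsilon}$, $\underline{S}$ from the generators via the universal property of $T(V_X)$ (using the braided tensor product as target for $\underline{\Delta}$ and the opposite superalgebra for $\underline{S}$), check that the cubic paraboson relations lie in each kernel so the maps descend to $P_B$, and verify the remaining axioms on generators. The sign computations you sketch (cancellation of cross-terms making $\{X_i^\xi,X_j^\eta\}$ primitive, and $\underline{S}(R)=-R$) are exactly the ones the paper asserts without writing out, and they check out.
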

\begin{proof}
Recall that by definition $P_{B} = T(V_{X}) / I_{P_{B}}$. Consider
the linear map: $$\underline{\Delta}^{T}: V_{X} \rightarrow P_{B}
\underline{\otimes} P_{B}$$ determined by it's values on the basis
elements specified by: $$\underline{\Delta}^{T}(X_{i}^{\pm}) = I
\otimes B_{i}^{\pm} + B_{i}^{\pm} \otimes I$$ By the universality
of the tensor algebra this map is uniquely extended to a
superalgebra homomorphism: $\underline{\Delta}^{T}: T(V_{X})
\rightarrow P_{B} \underline{\otimes} P_{B}$. Now we compute:
$$
\underline{\Delta}^{T}(\big[ \{ X_{i}^{\xi},  X_{j}^{\eta}\},
X_{k}^{\epsilon}  \big] -
 (\epsilon - \eta)\delta_{jk}X_{i}^{\xi} - (\epsilon -
 \xi)\delta_{ik}X_{j}^{\eta})= 0
$$
for all values of $\xi, \eta, \epsilon = \pm 1$ and $i,j=1, 2, ...
\ $. This means that $I_{P_{B}} \subseteq
ker\underline{\Delta}^{T}$, which in turn implies that
$\underline{\Delta}^{T}$ is uniquely extended as a superalgebra
homomorphism: $\underline{\Delta}: P_{B} \rightarrow P_{B}
\underline{\otimes} P_{B}$, according to the following
(commutative) diagram:
\begin{displaymath}
\xymatrix{T(V_{X}) \ar[rr]^{\underline{\Delta}} \ar[dr]_{\pi} & &
P_{B} \underline{\otimes} P_{B} \\
 & P_{B} \ar@{.>}[ur]_{\exists ! \ \underline{\Delta}} & }
\end{displaymath}
with values on the generators determined by \eqref{eq:HopfPB}.

Proceeding the same way we construct the maps $\
\underline{\varepsilon} \ $, $\ \ \underline{S} \ $, as determined
in \eqref{eq:HopfPB}. Note here that in the case of the antipode
$\underline{S}$ we need the notion of the $\mathbb{Z}_{2}$-graded
opposite algebra (or: opposite superalgera) $P_{B}^{op}$, which is
a superalgebra defined as follows: $P_{B}^{op}$ has the same
underlying super vector space as $P_{B}$, but the multiplication
is now defined as: $a \cdot b = (-1)^{|a||b|}ba$, for all $a,b \in
P_{B}$. (In the right hand side, the product is of course the
product of $P_{B}$). We start by defining a linear map
$$\underline{S}^{T}: V_{X} \rightarrow P_{B}^{op}$$ determined by:
$$\underline{S}^{T}(X_{i}^{\pm}) = -B_{i}^{\pm}$$ This map is (uniquely)
extended to a superalgebra homomorphism: $\underline{S}^{T}:
T(V_{X}) \rightarrow P_{B}^{op}$. We proceed by showing that:
\begin{equation} \label{eq:antipodeinsuperparab}
\underline{S}^{T}(\big[ \{ X_{i}^{\xi},  X_{j}^{\eta}\},
X_{k}^{\epsilon}  \big] -
 (\epsilon - \eta)\delta_{jk}X_{i}^{\xi} - (\epsilon -
 \xi)\delta_{ik}X_{j}^{\eta}) = 0
\end{equation}
for all values of $\xi, \eta, \epsilon = \pm 1$ and $i,j=1, 2, ...
\ $, or in other words: $I_{P_{B}} \subseteq ker\underline{S}^{T}$
which in turn implies that $\underline{S}^{T}$ is uniquely
extended to a superalgebra homomorphism $\underline{S}: P_{B}
\rightarrow P_{B}^{op}$, thus to a superalgebra anti-homomorphism:
$\underline{S}: P_{B} \rightarrow P_{B}$ with values on the
generators determined by \eqref{eq:HopfPB}.

Now it is sufficient to verify the rest of the super-Hopf algebra
axioms (coassociativity, counity and the compatibility condition
for the antipode) on the generators of $P_{B}$. This can be done
with straigthforward computations.
\end{proof}
Let us note here, that the above proposition generalizes a result
which -in the case of finite degrees of freedom- is a direct
consequence of the work in \cite{Pal}.

Finally, before closing this paragraph we should mention an
important consequence of the super-Hopf algebraic structure at the
level of representations. If $A$ is a super-Hopf algebra, the
existence of the comultiplication \eqref{braidedcom} permits us to
construct tensor product of braided representations (tensor
product of $\mathbb{Z}_{2}$-graded representations), through the
mechanism provided by the braiding and developed at the end of
\seref{11}: If $A$ is a Hopf algebra in
${}_{\mathbb{CZ}_{2}}\mathcal{M}$ and $V$, $\ W$, are (left)
$A$-modules in ${}_{\mathbb{CZ}_{2}}\mathcal{M}$, via the actions:
$$
\begin{array}{ccccc}
  \phi_{V} : A \otimes V
\rightarrow V &   &   &   & \phi_{W} : A \otimes W
\rightarrow W \\
\end{array}
$$
then the vector space $V \otimes W$ becomes a (left) $A$-module in
${}_{\mathbb{CZ}_{2}}\mathcal{M}$, via the action:
\begin{equation} \label{braidtensprodmod}
\phi_{V \otimes W} = (\phi_{V} \otimes \phi_{W}) \circ (Id \otimes
\Psi_{A,V} \otimes Id) \circ (\underline{\Delta} \otimes Id
\otimes Id) : A \otimes V \otimes W \rightarrow V \otimes W
\end{equation}
given equivalently by:
$$
a \cdot (v \otimes w) = \sum (-1)^{|a_{2}||v|}a_{1} \cdot v
\otimes a_{2} \cdot w
$$
for $a, v, w \ $ homogeneous in $A, V, W$ respectively, and
$\underline{\Delta}(a) = \sum a_{1} \otimes a_{2}$ for any $a \in
A$.

The preceding construction, can be generalized for longer tensor
products of braided representations: If $(V_{k})_{k \in I}$ ($I$:
some finite set) is a collection of (left) $A$-modules in
${}_{\mathbb{CZ}_{2}}\mathcal{M}$ for $A$ some super-Hopf algebra,
then the vector space $\otimes_{i \in I}V_{i}$ readily becomes an
$A \underline{\otimes} ... \underline{\otimes} A$-module
($I$-copies of $A$) according to the discussion at the end of
\seref{1}. The only thing we have to keep in mind is that each
time $A$ and $V_{i}$ have to interchange their order in the tensor
product $A \underline{\otimes} ... \underline{\otimes} A
\otimes_{i \in I}V_{i}$, in order for the action to be computed,
the suitable sign factor due to the braiding \eqref{symmbraid} has
to be inserted. (We underline the fact that in such permutations
between a copy of $A$ and $V_{i}$ both of $A$ and $V_{i}$ are
considered as $\mathbb{CZ}_{2}$-modules). Repeated use of the
comultiplication of the super-Hopf algebra $A$ may then be used to
turn this $A \underline{\otimes} ... \underline{\otimes} A$-module
to an $A$-module. We will come back in \seref{brainterpGrAns} to
discuss some applications and a more detailed formulation of these
ideas in the case of the tensor product representations of the
parabosonic algebra $P_{B}$.

\section{Bosonic and Parabosonic modules} \selabel{Gr}

 \subsection{Fock representation for Bosons} \selabel{bose}

We start our discussion by considering for simplicity, the case of
one degree of freedom. We thus consider the Weyl algebra (or :
CCR) \eqref{CCR} determined by
\begin{equation} \label{CCR1degree}
[q, p] = i \hbar I
\end{equation}
The generalization to the case of the finite degrees of freedom
(finite number of bosons) is straightforward as we shall see.

Since the early days of quantum theory, the analysis of the
representations of such an algebra underlies much of the
foundations of quantum mechanics. It was in the mid-twenties that
Heisenberg himself \cite{Heis} and Schr\"{o}edinger \cite{Schroed}
stated explicitly the first representation of such an algebra:
Guided by physical requirements, they considered representations
of \eqref{CCR1degree} on a complex, separable, infinite
dimensional Hilbert space, where $p$ and $q$ act as unbounded
\footnote{Generally speaking, an old result \cite{Wint} (but see
also in \cite{Put}) states that when considering representations
of the CCR \eqref{CCR1degree} in a Hilbert space, the generators
$p$ and $q$ cannot be both represented by bounded operators: at
least one of them has to be unbounded. This fact automatically
raises delicate questions about the domains of such operators and
at the same time indicates the importance of the use of unbounded
operators in the mathematical formulation of quantum mechanics. An
analogous result for the parabosonic algebra has been obtained in
\cite{Rob}.}, self-adjoint operators. Of course such a Hilbert
space is unique and we can regard as -isomorphic- copies of it,
either the space $l_{2}$ of all square integrable complex
sequences i.e. sequences $(x_{1}, x_{2}, ... ,x_{n}, ... )$ such
that $\sum_{i=1}^{\infty}|x_{1}|^{2} < \infty$, or the space
$L_{2}(- \infty, \infty)$ of all square integrable complex valued
functions i.e. functions $f(x)$ such that $\int_{-
\infty}^{\infty}|f(x)|^{2}dx < \infty$, where $x$ is a real
variable. \\
Heisenberg made the first choice and using as ``basis" of $l_{2}$
the total orthonormal set consisting of the elements $e_{i} =
(0,...,0,1,0,...)$ with $1$ on the $i$-th place and zero
everywhere else ($i = 1, 2, ...$) represented the generators $q$,
$ p$, in the form of infinite dimensional matrices. That was the
origin of the so-called matrix mechanics formalism. \\
On the other hand Schr\"{o}edinger made use of the second option
and represented the generators $q$, $p$, as the multiplication and
the differentiation operators $x$ and $-i \frac{d}{dx}$ in
$L_{2}(- \infty, \infty)$ respectively, resulting thus with what
is now known as the wave mechanics formalism. \\
 These two
representations are (unitarily) equivalent: If we choose a total
orthonormal set in $L_{2}(- \infty, \infty)$ consisting of the
functions $e_{n}(x) = e^{-x^{2}/2} H_{n}(x) \big/
\pi^{1/4}\sqrt{2^{n}n!}$ where $H_{n}(x)$ are the Hermite
polynomials ($n = 0, 1, 2, ...$) we can straightforwardly check
that the matrix forms of the operators $x$ and $-i \frac{d}{dx}$
in $L_{2}(- \infty, \infty)$ coincide with the infinite
dimensional matrices stated by Heisenberg. The corresponding
modules are thus
isomorphic. \\

We must point out here that the Heisenberg - Schr\"{o}edinger
representation described above is not a representation in the
strict algebraic sense. The multiplication and differentiation
operators are obviously not defined on the entire of $L_{2}(-
\infty, \infty)$. Hopefully the situation is easily cured: Neither
the sequences $e_{i}$, $i = 1, 2, ...$ nor the functions
$e_{n}(x)$, $n = 0, 1, 2, ...$ constitute bases -in the sense of
algebra- of the spaces $l_{2}$ and $L_{2}(- \infty, \infty)$
respectively. In other words they are not Hammel bases but simply
total orthonormal sequences. This means that their span is not the
entire space but only a dense subspace of it. This subspace is no
longer a Hilbert space, since it is not complete. It is merely a
euclidean (or: pre-Hilbert) space. Restricting our attention to
this space, the Heisenberg - Schr\"{o}edinger representation is
actually well-defined (in the strict algebraic sense) and it is a
well-known fact for physicists that it is an irreducible
representation (i.e: a simple module). If we use the basis given
in \eqref{creation-destruction op} instead of \eqref{CCR} we have
-for the case of a single degree of freedom- the CCR in the form
of the single boson algebra:
\begin{equation} \label{CCRbose1deg}
[b^{-}, b^{+}] = I
\end{equation}
Following a widespread notation of the physical literature, we
shall denote the $e_{i}$, $i = 1, 2, ...$ or the $e_{n}(x)$, $n =
0, 1, 2, ...$ total orthonormal sets of $l_{2}$ or $L_{2}(-
\infty, \infty)$ respectively by $|n>$ for $n = 0, 1, 2, ...$.
Using this formalism it is straightforward to check that the
Heisenberg - Schr\"{o}edinger representation can be equivalently
described in terms of the following action:
\begin{equation} \label{Fock1particle}
\begin{array}{ccccc}
  b^{+} |n> = \sqrt{n+1} \ |n+1> &  &  &  & b^{-} |n> = \sqrt{n} \ |n-1> \\
\end{array}
\end{equation}
where the elements $b^{+}$, $ \ b^{-}$, act as Hilbert-adjoint
operators, i.e.:
$$
(b^{-})^{\dagger} = b^{+}
$$
The carrier space is the pre-Hilbert space generated by the set
$\{ |n> / n = 0, 1, 2, ... \}$. This is a dense subspace of the
complex separable infinite dimensional Hilbert space $H$. We can
easily check using the above described formalism that:
\begin{equation} \label{Fock1particlegenerat}
\begin{array}{ccccc}
 |n> = \frac{1}{\sqrt{n!}}(b^{+})^{n} |0>  &  &  &  & b^{-} |0> = 0 \\
\end{array}
\end{equation}
which enables us to conclude that the Heisenberg -
Schr\"{o}edinger representation is a cyclic module generated by
any of it's elements, which in turn implies that it is a simple
module (or: an irreducible representation). The vector $|0>$ is
called the ``ground state" or the ``vacuum" of the system and it's
corresponding wave-function (under the above mentioned
isomorphism) is: $\ \pi^{-1/4} e^{-x^{2}/2}$. \\

The generalization of the above mentioned representation for the
general case of the bosonic algebra is a well known fact for
physicists:

 In the case of a finite number of bosons we have the
algebra $B^{(m)}$ (bosonic algebra in $2m$ generators or $m-$boson
algebra) described in terms of generators and relations by
\eqref{CCRbose} for $\ i,j = 1, 2, ... , m$. We construct a
representation which is uniquely determined (up to unitary
equivalence) by the demand for the existence of a unique vacuum
vector belonging in a Hilbert space and annihilated by all
$b_{i}^{-}$'s, together with the Hilbert adjointness condition:
\begin{equation} \label{Fockfinite1}
\begin{array}{ccccc}
 b_{i}^{-} |0> = 0 &   &   &   & (b_{i}^{-})^{\dagger} = b_{i}^{+}  \\
\end{array}
\end{equation}
for every value $i = 1, 2, ..., m$. The carrier space is generated
by elements of the form:
\begin{equation} \label{Fockm}
|k_{1}, k_{2}, \cdots, k_{i}, \cdots, k_{m}> =
\frac{(b_{1}^{+})^{k_{1}}(b_{2}^{+})^{k_{2}} \cdots
(b_{i}^{+})^{k_{i}} \cdots (b_{m}^{+})^{k_{m}}
}{\sqrt{k_{1}!k_{2}! \ldots k_{i}! \ldots k_{m}!}} |0>
\end{equation}
where $i = 1, 2, ..., m$, the $k_{i}$'s are non-negative integers.
The corresponding actions of the elements of $B^{(m)}$ can be
calculated to be:
\begin{equation} \label{Fockactm}
\begin{array}{c}
  b_{i}^{+} |k_{1}, k_{2}, \cdots, k_{i}, \cdots, k_{m}> =
\sqrt{k_{i}+1} \ |k_{1}, k_{2}, \cdots, k_{i}+1, \cdots, k_{m}> \\
   \\
  b_{i}^{-} |k_{1}, k_{2}, \cdots, k_{i}, \cdots, k_{m}> =
\sqrt{k_{i}} \ |k_{1}, k_{2}, \cdots, k_{i}-1, \cdots, k_{m}> \\
\end{array}
\end{equation}
This is again a cyclic module generated by any of it's elements,
thus a simple module. This representation is usually called the
Fock representation or the Fock-Cook representation. We are going
to denote it by ${}_{B^{(m)}}F$. Certain subspaces of this vector
space contain the physically realizable states i.e. vectors which
are symmetric under the exchange of particles leading thus to the
Bose-Einstein statistics.  The Fock representation ${}_{B^{(m)}}F$
can be constructed as the tensor product of $m-$copies of the
$1-$particle Heisenberg - Schr\"{o}edinger representation
${}_{B^{(1)}}F$ described by \eqref{Fock1particle},
\eqref{Fock1particlegenerat}, where we do not take into account
the $\mathbb{Z}_{2}-$grading when constructing the tensor product.
We will not pursue these subjects further here but instead we
refer to the classic works \cite{FockCook, Seg}. \\
We mention that the Fock representation ${}_{B^{(m)}}F$ is a
$\mathbb{Z}_{2}-$graded $B^{(m)}-$module (or equivalently a
$B^{(m)}-$module in the braided monoidal category of the
$\mathbb{CZ}_{2}$-modules ${}_{\mathbb{CZ}_{2}}\mathcal{M}$). This
can be easily seen since the vector space spanned by elements of
the form \eqref{Fockm} is a $\mathbb{Z}_{2}-$graded vector space.
It's even subspace is spanned by elements with
$\sum_{i=1}^{m}k_{i} = even$ and it's odd subspace is spanned by
elements with $\sum_{i=1}^{m}k_{i} = odd$. The action given in
\eqref{Fockactm} describes how the (odd) generators of $B^{(m)}$
act on the above described subspaces.  \\

In the case of infinite degrees of freedom, we have the general
case of the bosonic algebra $B \ $ described in terms of
generators and relations by \eqref{CCRbose} for $\ i,j = 1, 2, ...
\ $ (see also \deref{parabosonsbosons}). In this case the Fock
space is generated by elements of the form:
\begin{equation} \label{Fockinfinitegenerated}
|k_{1}, k_{2}, \cdots, k_{i}, \cdots> =
\frac{(b_{1}^{+})^{k_{1}}(b_{2}^{+})^{k_{2}} \cdots
(b_{i}^{+})^{k_{i}} \cdots}{\sqrt{k_{1}!k_{2}! \ldots k_{i}!
\ldots}} |0>
\end{equation}
and the Fock representation ${}_{B}F$ is uniquely determined (up
to unitary equivalence) by the demand for the existence of a
unique vacuum together with the Hilbert adjointness condition:
\begin{equation} \label{Fockinfinite1}
\begin{array}{ccccc}
 b_{i}^{-} |0> = 0 &   &   &   & (b_{i}^{-})^{\dagger} = b_{i}^{+}  \\
\end{array}
\end{equation}
which in turn imply:
\begin{equation} \label{Fockinfinite}
\begin{array}{c}
 b_{i}^{+} |k_{1}, k_{2}, \cdots, k_{i}, \cdots> =
\sqrt{k_{i}+1} \ |k_{1}, k_{2}, \cdots, k_{i}+1, \cdots>    \\
    \\
   b_{i}^{-} |k_{1}, k_{2}, \cdots, k_{i}, \cdots> =
\sqrt{k_{i}} \ |k_{1}, k_{2}, \cdots, k_{i}-1, \cdots>  \\
\end{array}
\end{equation}
where $i = 1, 2, ...$ and the $k_{i}$'s are non-negative integers.
\\
 The monomials $(b_{1}^{+})^{k_{1}}(b_{2}^{+})^{k_{2}} \cdots
(b_{i}^{+})^{k_{i}} \cdots \ $ stated in the right hand side of
\eqref{Fockinfinitegenerated} contain of course a finite number of
factors only. In other words for any vector of the form
\eqref{Fockinfinitegenerated} we have: $\sum_{j=1}^{\infty}k_{j} <
\infty$. We have an irreducible, $\mathbb{Z}_{2}-$graded,
$B-$module which we will denote by ${}_{B}F$. This is actually the
starting point for the study of the second quantization problem
(see: \cite{FockCook, Seg}).

\subsection{The Green's Ansatz and Fock-like representations for
Parabosons} \selabel{GrAns}

When Green first introduced the parabosonic algebra in infinite
degrees of freedom $P_{B}$ in \cite{Green}, one of his first tasks
was the investigation of it's representations and more
specifically the construction of representations which might serve
as a generalization of the bosonic Fock representation previously
described. For this purpose, he introduced a device which has been
known since then in the bibliography of mathematical physics as
the ``Green's ansatz".

The Green's ansatz $(G(p))_{p=1, 2, ...}$ is an infinite family of
associative algebras labelled by a positive integer $p$. For each
specific value of $p$ we have a different algebra denoted by
$G(p)$ described by means of generators and relations: It is the
associative algebra generated by the elements $b_{i}^{\alpha}$, $\
b_{i}^{\alpha \ +}$ and the unity $I$, for all values $i = 1, 2,
... \ $ and $\alpha = 1, 2, ..., p$. For a specific value of
$\alpha$ the generators satisfy the usual CCR :
\begin{equation} \label{GrAns1}
\begin{array}{ccccc}
  [b_{i}^{\alpha \ -}, b_{j}^{\alpha \ +}] = \delta_{ij}I &   &   &   &
  [b_{i}^{\alpha \ -}, b_{j}^{\alpha \ -}] = [b_{i}^{\alpha \ +}, b_{j}^{\alpha \ +}] = 0 \\
\end{array}
\end{equation}
while for values $\alpha \neq \beta$ we have the ``anomalous"
(anticommutation) relations:
\begin{equation} \label{GrAns2}
  \{ b_{i}^{\alpha \ -}, b_{j}^{\beta \ +} \} =
  \{ b_{i}^{\alpha \ -}, b_{j}^{\beta \ -} \} = \{ b_{i}^{\alpha \ +}, b_{j}^{\beta \ +} \} = 0
\end{equation}
The above relations hold for all values of $i, j = 1, 2, ... \ $
and $\alpha = 1, 2, ..., p$ and completely describe the algebra
$G(p)$. In a terminology more familiar to physicists, we can say
that $G(p)$ describes $p$ anticommuting bosonic fields. We note
here, that for $p=1$, the corresponding algebra of Green's ansatz
is just the familiar bosonic algebra: $G(1) = B$.

Green showed that for every specific algebra $G(p)$ we pick, among
the ones constituting the Green's ansatz, i.e. for every specific
choice we make for the value of the positive integer $p$, we have
an associative algebra homomorphism $\pi_{p} : P_{B} \rightarrow
G(p)$ stated explicitly by:
\begin{equation} \label{GrAnsRepr}
\pi_{p}(B_{i}^{\pm}) = \sum_{\alpha = 1}^{p} b_{i}^{\alpha \ \pm}
\end{equation}
We remark here, that for $p=1$, we have: $\pi_{1} = \phi : P_{B}
\rightarrow B = G(1)$. This is exactly the replacement map of
\prref{parabosonstobosons}, which is nothing else but the algebra
projection of $P_{B}$ onto $B$.

 Green's idea was to make use of
the above mentioned homomorphisms, in order to construct
representations of the parabosonic algebra $P_{B}$ initiating from
representations of $G(p)$. However his idea was not fully
exploited until a few years later by Greenberg and Messiah, in
\cite{GreeMe}, where they virtually classified all the Fock-like
representations of $P_{B}$. Their method was -shortly- as follows:
For a specific value of $p$ they considered a representation of
the $G(p)$ in (a subspace of) a Hilbert space, determined by the
demand for the existence of a unique ``vacuum" vector $|0>$ :
\begin{equation} \label{GrAnsvac}
 b_{i}^{\alpha \ -} |0> = 0
\end{equation}
under the Hilbert-adjointness condition:
\begin{equation} \label{GrAnsvac1}
(b_{i}^{\alpha \ -})^{\dagger} = b_{i}^{\alpha \ +}
\end{equation}
for all values $i = 1, 2, ... \ $ and $\alpha = 1, 2, ..., p$.
They showed that the cyclic $G(p)$-module generated by $|0>$, is
an irreducible representation of the $G(p)$ algebra and is
specified -up to unitary equivalence- by the conditions
\eqref{GrAnsvac}, \eqref{GrAnsvac1}. This representation has as
carrier space the vector space spanned by vectors of the form:
$\mathcal{P}(b_{i}^{\alpha \ \pm})|0>$, where $\mathcal{P}$ is an
arbitrary polynomial of the generators $b_{i}^{\alpha \ \pm}$ for
all values $i = 1, 2, ... \ $ and $\alpha = 1, 2, ..., p$.
Utilizing the algebra homomorphism \eqref{GrAnsRepr} they turned
the above constructed vector space into a $P_{B}$-module through:
$B_{i}^{\pm}|..> = \pi_{p}(B_{i}^{\pm})|..>$ for all values $i =
1, 2, ... \ $. They showed that this $P_{B}$-module is reducible,
and that it contains an irreducible $P_{B}$-submodule which is
cyclic and generated (as a submodule) by $|0>$. This irreducible
$P_{B}$-submodule is generated (as a vector space) by all the
elements of the form: $\mathcal{P}(B_{i}^{\pm})|0>$ for
$\mathcal{P}$ an arbitrary polynomial of the generators.
Furthermore they showed that in this (irreducible)
$P_{B}$-submodule we have the relations:
\begin{equation} \label{FocklikePb}
\begin{array}{ccccc}
  B_{i}^{-} |0> = 0 &  & B_{i}^{-}B_{j}^{+}|0> = p \delta_{ij} |0> &  & (B_{i}^{-})^{\dagger} = B_{i}^{+} \\
\end{array}
\end{equation}
for all values $i, j = 1, 2, ... \ $, which specify it up to
unitary equivalence.   \\

 Apart from utilizing Green's ansatz, Greenberg and
Messiah also studied the representations of the parabosonic
algebra in more abstract terms. Inspired by the case of the
bosonic algebra (CCR), they studied Fock-like representations of
the parabosonic algebra i.e. representations of $P_{B}$ in a
(suitably chosen subspace of a) Hilbert space possessing a unique
``vacuum" which means a unique vector satisfying: $B_{i}^{-} |0> =
0$ under the condition that $B_{i}^{-}$, $B_{i}^{+}$, act as
Hilbert adjoint operators: $(B_{i}^{-})^{\dagger} = B_{i}^{+}$,
for all values $i = 1, 2, ... \ $. They found that unlike the
boson case where such requirements specify a unique up to
isomorphism irreducible representation, the Fock-representation
${}_{B}F$, in the paraboson case these requirements specify -up to
isomorphism- an infinite collection of irreducible representations
labelled by a positive integer $p$. We are going to denote these
Fock-like representations by ${}_{P_{B}}F(p)$. Their central
results are summarized in the following theorem:
\begin{theorem} \thlabel{FocklikereprPb}
Any representation of the parabosonic algebra $P_{B}$ in a Hilbert
space, possessing a unique ``vacuum" vector satisfying: $B_{i}^{-}
|0> = 0$ for all values $i = 1, 2, ... \ $ and for which:
$(B_{i}^{-})^{\dagger} = B_{i}^{+} \ $ $\forall$ $ \ i = 1, 2,
...$, has also the following properties:
\begin{itemize}
\item[-] It also satisfies: $B_{i}^{-}B_{j}^{+}|0> = p \delta_{ij}
|0>$ for all values $i, j = 1, 2, ... \ $, where $p$ is an
arbitrary positive integer ($p = 1, 2, ... \ $).

\item[-] For a fixed value of $p$, it is an irreducible
representation (i.e.: the corresponding module is simple).

\item[-] For a fixed value of $p$ this irreducible representation
is unique up to unitary equivalence (and thus is isomorphic with
the $p$-submodule contained in the reducible module produced by
Green's ansatz previously described).

\item[-] The carrier space is spanned by vectors of the form:
$\mathcal{P}(B_{i}^{+})|0>$ where $\mathcal{P}$ an arbitrary
monomial of the generators $B_{i}^{+}$ for all values $i = 1, 2,
... \ $, and it is an infinite dimensional vector space.
\end{itemize}
\end{theorem}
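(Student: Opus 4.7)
The plan splits into three blocks: first, extract from the hypotheses a scalar $p$ satisfying $B_i^- B_j^+|0\rangle = p\delta_{ij}|0\rangle$; second, force $p$ into $\mathbb{Z}_{>0}$; third, promote this datum into unitary uniqueness, irreducibility, and the spanning description.

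For block one, I would specialise the PCR \eqref{eq:PCR} to $\xi=-$, $\eta=+$, $\epsilon=-$, obtaining $[\{B_i^-, B_j^+\}, B_k^-] = -2\delta_{jk}B_i^-$; applied to $|0\rangle$ this yields $B_k^-(B_i^- B_j^+|0\rangle)=0$ for every $k$, so uniqueness of the vacuum forces $B_i^- B_j^+|0\rangle = c_{ij}|0\rangle$ for scalars $c_{ij}$. The companion specialisation $[\{B_i^-, B_j^+\}, B_i^-]=0$ (valid for $i\neq j$), applied to $B_i^+|0\rangle$ and expanded using $[\{B_i^-, B_j^+\}, B_i^+]=2B_j^+$ to evaluate $\{B_i^-, B_j^+\}B_i^+|0\rangle$, collapses to the scalar identity $c_{ii}c_{ij} - c_{ij}c_{ii} - 2c_{ij} = 0$, whence $c_{ij}=0$ for $i\neq j$. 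For $c_{ii}=c_{jj}$, I would compute $\langle 0 | B_j^- B_i^- B_j^+ B_i^+|0\rangle = (2-c_{ii})c_{jj}$ and $\langle 0 | B_i^- B_j^- B_i^+ B_j^+|0\rangle = (2-c_{jj})c_{ii}$; since both $c_{ii}$ and $c_{jj}$ are real (they are the squared norms $\|B_i^+|0\rangle\|^2, \|B_j^+|0\rangle\|^2$), Hermitian symmetry of the Hilbert form equates the two expressions and yields $c_{ii}=c_{jj}=:p$.

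For block two I would exploit antisymmetric states $A_k = \sum_{\sigma\in S_k}\mathrm{sgn}(\sigma)\,B_{\sigma(1)}^+\cdots B_{\sigma(k)}^+|0\rangle$ with $k$ distinct indices, available for arbitrary $k$ since $P_B$ has infinitely many generators. Iterated PCR reductions express $\|A_k\|^2$ as a positive combinatorial constant times $p(p-1)(p-2)\cdots(p-k+1)$; already the $k=2$ case gives $\|A_2\|^2 = 4p(p-1)$, excluding $p\in(0,1)$, and letting $k$ grow forces $p\in\mathbb{Z}_{\geq 0}$. The value $p=0$ kills every $B_i^+|0\rangle$ (since $\|B_i^+|0\rangle\|^2 = p$) and collapses the module to $\mathbb{C}|0\rangle$, which is incompatible with a non-trivial representation; thus $p\in\mathbb{Z}_{>0}$.

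For block three, once $p$ is fixed, every inner product $\langle 0|M_1 M_2|0\rangle$ with $M_1, M_2$ monomials in the $B_i^{\pm}$ reduces, via iterated PCR moves and the annihilation rule, to a universal polynomial in $p$ alone; hence the Gram form on the cyclic submodule generated by $|0\rangle$ is determined by $p$. This gives unitary equivalence of any two hypothesis-satisfying representations with the same $p$. Pushing all annihilation operators to the right in an arbitrary monomial yields $P_B\cdot|0\rangle = \mathrm{span}\{\mathcal{P}(B_i^+)|0\rangle\}$, establishing the spanning assertion. Irreducibility follows by a standard lowering argument: any non-zero invariant subspace $W$ contains some vector $v$, and non-degeneracy of the Gram form (manifest in the explicit Greenberg-Messiah construction attached to each positive integer $p$) produces a sequence of $B_i^-$-lowerings reducing $v$ to a non-zero scalar multiple of $|0\rangle$, whence $W = P_B\cdot|0\rangle$. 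The main obstacle I anticipate is the integrality step in block two: verifying the factorisation $\|A_k\|^2 \propto p(p-1)\cdots(p-k+1)$ demands careful induction on $k$ together with tracking of which $B_i^-$/$B_i^+$ pairings survive the PCR simplifications in the antisymmetrised sum, and the conclusion $p\in\mathbb{Z}_{>0}$ genuinely exploits the availability of arbitrarily many distinct indices, a feature absent in the finite-degrees-of-freedom setting where one would instead appeal to the $osp(1|2m)$ representation theory of \cite{Pal}.
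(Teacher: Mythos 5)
You should first be aware that the paper does not prove \thref{FocklikereprPb} at all: it presents the theorem as a summary of Greenberg--Messiah's results and explicitly defers the proof to chapters 4 and 5 of \cite{OhKa}. So there is no in-paper argument to match. Of the two routes available in that reference, your proposal follows the ``direct'' one (vacuum conditions plus positivity of the Hilbert form), whereas the route the paper itself is organised around is the Green's ansatz one, in which the representations are first \emph{constructed} inside $G(p)\cong B\underline{\otimes}\cdots\underline{\otimes}B$ and uniqueness is then proved against these models. Your block one is correct and I checked the computations: the specialisation $[\{B_i^-,B_j^+\},B_k^-]=-2\delta_{jk}B_i^-$ does give $B_k^-\bigl(B_i^-B_j^+|0\rangle\bigr)=0$ and hence $B_i^-B_j^+|0\rangle=c_{ij}|0\rangle$; the identity $\{B_i^-,B_j^+\}B_i^+|0\rangle=c_{ij}B_i^+|0\rangle+2B_j^+|0\rangle$ combined with $[\{B_i^-,B_j^+\},B_i^-]=0$ for $i\neq j$ yields $2c_{ij}=0$; and $\langle 0|B_j^-B_i^-B_j^+B_i^+|0\rangle=(2-c_{ii})c_{jj}$ together with its conjugate forces $c_{ii}=c_{jj}$. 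The $k=2$ norm $\|A_2\|^2=4p(p-1)$ is also correct.

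The genuine gap is exactly where you flag it, and it is not a small one: the factorisation $\|A_k\|^2=C_k\,p(p-1)\cdots(p-k+1)$ with $C_k>0$ is the entire content of the integrality claim, and nothing in your proposal establishes it. Without it you only exclude $p\in(0,1)$, not $p\in(1,2)$, $(2,3)$, etc. The induction you would need is genuinely delicate because $P_B$ has no bilinear reordering relations: you cannot ``push all annihilation operators to the right'' as an identity in the algebra, only as a statement about the module, proved by induction on the degree using the trilinear relations together with $B_i^-|0\rangle=0$ and $B_i^-B_j^+|0\rangle=p\delta_{ij}|0\rangle$. That same reduction lemma is also what you are silently invoking in block three when you assert that every vacuum expectation value $\langle 0|M_1M_2|0\rangle$ is a universal polynomial in $p$ (needed for uniqueness) and that the Gram form is non-degenerate (needed for your lowering argument for irreducibility). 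So the one missing lemma underlies all three blocks. A cleaner way to discharge it -- and essentially what the Green's ansatz route buys you -- is to prove the reduction lemma once, conclude that for each admissible $p$ the representation is unique if it exists, and then exhibit existence and positivity for integer $p$ via the irreducible $G(p)$-submodule of \coref{tensprred}; the antisymmetric-state computation then only has to rule out non-integer $p$, for which the single sign change in $p(p-1)\cdots(p-k+1)$ at $k=\lceil p\rceil+1$ suffices. Finally, note that irreducibility for a $*$-representation with unique vacuum is more cheaply obtained by observing that the orthogonal complement of an invariant subspace is invariant (since $(B_i^-)^\dagger=B_i^+$) and that the vacuum must lie wholly in one of the two summands; this avoids the non-degeneracy appeal in your lowering argument.
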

Thus: the simple $P_{B}$-modules satisfying the ``vacuum" state
(or: ``ground" state) condition $B_{i}^{-}|0> = 0$, are labelled
by a positive integer $p$. In the physics literature this integer
is called ``the order of the parastatistics". These are the
Fock-like representations ${}_{P_{B}}F(p)$ of the parabosonic
algebra $P_{B}$ generalizing the bosonic Fock representation
${}_{B}F$ described in the previous paragraph. For a detailed
proof of the theorem without the use of Green's ansatz see also
\cite{OhKa} chapter $4$, while the proof via the Green's ansatz
can be also found in chapter $5$ of the same reference.  \\
For $p = 1$ we get the first Fock-like representation
${}_{P_{B}}F(1)$ of the parabosonic algebra, which we will denote
by ${}_{P_{B}}F$ from now on. We can further check that for $p=1$
we have $G(1) = B$ and the Green homomorphism $\pi_{1}$ of
\eqref{GrAnsRepr} coincides with the replacement map
$\phi(B_{i}^{\pm}) = b_{i}^{\pm}$ of \prref{parabosonstobosons}.
This implies, that the representation ${}_{P_{B}}F$ coincides with
the representation of $P_{B}$ constructed via the bosonic Fock
representation ${}_{B}F$ and \coref{Parabreprfrombosrepr}.

 Let us
finally emphasize that Greenberg's work allows one to specify a
much more general class of representations than the Fock-like
representations described in \thref{FocklikereprPb}: If the
requirement of the positivity of the inner product is dropped,
i.e. if we consider more general spaces than Hilbert spaces (for
example in Krein spaces), then only the first and the last
conclusions of the above theorem hold, but with p now being an
arbitrary positive number. We won't deal with such representations
here.     \\

 We should mention here that the carrier space for
every Fock-like representation of the parabosonic algebra (this
means: for every specific value of the positive integer $p$) has a
much more complicated structure than the corresponding carrier
space of the bosonic Fock representation: In the latter case, the
fact that the generators $b_{i}^{+}$ commute with each other for
every value of $i = 1, 2, ... \ $ means that an arbitrary monomial
of the generators $b_{i}^{+}$ is of the form
$(b_{1}^{+})^{k_{1}}(b_{2}^{+})^{k_{2}} \cdots (b_{i}^{+})^{k_{i}}
\cdots$ (with a finite number of factors only, as has already been
stated). This in turn implies that the positive integers $k_{1},
k_{2}, ..., k_{i}, ...$ ($k_{i} \neq 0$ for a finite number of
values of $i$ only) uniquely determine the vectors spanning the
bosonic Fock space. Consequently, the notation used in equations
\eqref{Fockm} and \eqref{Fockinfinitegenerated} is well defined.
It is obvious however that this is not the case for the
parabosonic Fock-like representations: The more complicated nature
of the relations in the parabosonic algebra imply that in general,
even vectors of the form: $B_{i}^{+}B_{j}^{+}|0>$ and
$B_{j}^{+}B_{i}^{+}|0>$ for $i \neq j$ may be linearly
independent. From the physicist's point of view, one may say that
permutations of the $b_{i}^{+}, b_{j}^{+}, ... \ $ correspond to
permutations of particles (bosons) while in general permutations
of the $B_{i}^{+}, B_{j}^{+}, ... \ $ do not correspond to
permutations of particles (parabosons). Since the construction of
the Fock space is ultimately connected with the statistics of the
particles, this immediately draws the line between ordinary
Bose-Einstein statistics and parastatistics, but this is an
entirely different subject and we will not discuss it here.

  A consequence of the preceding conversation, is that the
computation of the explicit form of the action of the generators
$B_{i}^{+}, B_{j}^{-}$, on vectors of the form
$\mathcal{P}(B_{i}^{+})|0>$ which span the carrier space for the
parabosonic Fock-like representation, is not a trivial matter at
all. Up to our knowledge, it has been done only for the case of a
single paraboson (i.e.: the parabosonic algebra $P_{B}^{(1)}$
generated by the elements $B^{+}, B^{-}$ subject to the relations
\eqref{eq:pbdef} for $i = 1$). Such computations can be found in
\cite{OhKa}. They lead to the following relations:
\begin{equation}
\begin{array}{cc}
  B^{+} |2n> = \sqrt{2n+p} \ |2n+1> &  B^{+} |2n+1> = \sqrt{2n+2} \ |2n+2>  \\
    \\
  B^{-} |2n> = \sqrt{2n} \ |2n-1>  &  B^{-} |2n+1> = \sqrt{2n+p} \ |2n>  \\
\end{array}
\end{equation}
for $n = 0, 1, 2, ... \ $ and $|n> \sim (B^{+})^{n} |0>$, which
explicitly describe the action of the single paraboson algebra on
it's Fock-like representation determined by the specific value of
$p$.

Finally it is worth noticing before closing this paragraph, that
the Fock-like representations ${}_{P_{B}}F(p)$ of the parabosonic
algebra $P_{B}$ are $\mathbb{Z}_{2}-$graded $P_{B}-$modules (or
equivalently $P_{B}-$modules in the braided monoidal category of
the $\mathbb{CZ}_{2}$-modules ${}_{\mathbb{CZ}_{2}}\mathcal{M}$).

\section{Braided bosons: The Green's ansatz revisited} \selabel{brainterpGrAns}

We turn in this paragraph to an application of the braided tensor
product ``technology" in the theory of the representations of the
parabosonic algebra $P_{B}$. We prove the following proposition
which gives a ``braided" interpretation of the Green's ansatz
device:
\begin{proposition} \prlabel{braidGrAns}
For every specific choice of the positive integer $p$, the
corresponding algebra $G(p)$ in Green's ansatz, which is described
in terms of generators and relations in equations \eqref{GrAns1},
\eqref{GrAns2}, is isomorphic to the braided tensor product
algebra, of $p$ copies of the bosonic algebra $B$. We have the
superalgebra isomorphism:
\begin{equation} \label{isomGrAnsbraidtenspro}
G(p) \cong B \underline{\otimes} \ldots \underline{\otimes} B
\end{equation}
where we have $p$-copies of the bosonic algebra $B$ in the right
hand side of the above relation.
\end{proposition}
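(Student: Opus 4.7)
The plan is to exhibit mutually inverse even algebra homomorphisms between $G(p)$ and the braided tensor product $B\underline{\otimes}\cdots\underline{\otimes}B$ ($p$ copies), from which the superalgebra isomorphism follows automatically. The conceptual point is that the two families of $G(p)$-relations correspond exactly to the two possibilities in the braided multiplication: the relations \eqref{GrAns1} come from multiplication confined to a single tensor slot (which reduces to ordinary multiplication in $B$), while the anomalous anticommutation relations \eqref{GrAns2} come from the sign that the braiding \eqref{symmbraid} inserts whenever two odd elements in distinct slots are swapped.

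First I will define $\psi \colon G(p) \to B \underline{\otimes} \cdots \underline{\otimes} B$ on generators by $\psi(b_i^{\alpha,\pm}) = 1 \otimes \cdots \otimes b_i^{\pm} \otimes \cdots \otimes 1$ with $b_i^{\pm}$ placed in the $\alpha$-th slot, and verify well-definedness by checking the $G(p)$-relations on the images: \eqref{GrAns1} is immediate since two elements supported in the same slot multiply with trivial cross-slot sign in the formula \eqref{braidedtenspr}, and \eqref{GrAns2} follows because the grading of \prref{parabosonstobosons} makes each $b_i^{\pm}$ odd, so swapping two images supported in different slots picks up exactly the sign $-1$. For the inverse, for each $\alpha$ the relations \eqref{GrAns1} produce an algebra map $\iota_\alpha \colon B \to G(p)$ with $\iota_\alpha(b_i^{\pm}) = b_i^{\alpha,\pm}$; the induced multilinear assignment $(x_1,\ldots,x_p) \mapsto \iota_1(x_1)\iota_2(x_2)\cdots \iota_p(x_p)$ then factors through the tensor product to yield a linear map $\chi$ on the underlying super vector space of $B\underline{\otimes}\cdots\underline{\otimes}B$. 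Checking on generators, $\psi \circ \chi$ and $\chi \circ \psi$ both act as the identity, provided we know $\chi$ is an algebra homomorphism for the braided multiplication.

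The main obstacle is precisely this last point, which reduces to matching two sign computations. For homogeneous tensor monomials $x_1\otimes\cdots\otimes x_p$ and $y_1\otimes\cdots\otimes y_p$, the braided multiplication \eqref{braidedtenspr} produces the sign $(-1)^{\sum_{\alpha > \beta}|x_\alpha||y_\beta|}$, whereas rewriting $\iota_1(x_1)\cdots\iota_p(x_p)\iota_1(y_1)\cdots\iota_p(y_p)$ inside $G(p)$ as $\iota_1(x_1 y_1)\cdots\iota_p(x_p y_p)$ requires sliding each $\iota_\beta(y_\beta)$ leftward past every $\iota_\alpha(x_\alpha)$ with $\alpha > \beta$. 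The reconciliation rests on a cross-species super-commutativity claim: for $\alpha \neq \beta$ and homogeneous $\alpha$-species and $\beta$-species monomials $u, v$ of lengths $k, l$, one has $uv = (-1)^{kl}vu$ in $G(p)$, proved by straightforward induction on $k+l$ using \eqref{GrAns2}. Since every generator is odd, the parity of such a monomial equals its length modulo $2$, so the two signs agree term-by-term. With this parity lemma in hand the isomorphism drops out, and the $\mathbb{Z}_2$-grading is preserved by both maps because they send odd generators to odd generators.
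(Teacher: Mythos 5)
Your proof is correct, and its first half coincides with what the paper actually does: the paper's proof consists precisely of introducing the slot notation $b_{i}^{(r)\pm}$, computing $[b_{i}^{(r)-},b_{j}^{(r)+}]=\delta_{ij}(I\otimes\cdots\otimes I)$ etc. within a single slot, and deriving the anticommutators $\{b_{i}^{(r)\pm},b_{j}^{(s)\pm}\}=0$ for $r\neq s$ from the sign inserted by the braiding \equref{symmbraid}, then observing that these relations coincide with \eqref{GrAns1}, \eqref{GrAns2} under $b_{i}^{(r)\pm}\leftrightarrow b_{i}^{\alpha\pm}$. Where you genuinely go beyond the paper is the second half: the paper stops at ``the relations coincide,'' which strictly speaking only yields a surjective superalgebra map $G(p)\rightarrow B\underline{\otimes}\cdots\underline{\otimes}B$ (the defining relations of $G(p)$ hold for the images of its generators, which generate the target); injectivity is left implicit. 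You close this gap by constructing the explicit inverse $\chi$ from the multilinear assignment $(x_{1},\ldots,x_{p})\mapsto\iota_{1}(x_{1})\cdots\iota_{p}(x_{p})$ and proving it is an algebra map for the braided multiplication via the cross-species parity lemma $uv=(-1)^{kl}vu$ in $G(p)$, which is exactly the point where the fact that length mod $2$ equals parity (all generators odd) is needed. This buys a fully two-sided verification of the isomorphism at the cost of one extra induction; the paper's version is shorter but relies on the reader accepting that the braided tensor product is presented by exactly the listed relations. Both arguments hinge on the same conceptual observation, namely that the $\mathbb{Z}_{2}$-grading of $B$ with odd generators (\prref{parabosonstobosons}) converts the braiding sign into Green's anomalous anticommutators.
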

\begin{proof}
Let us use -for the elements of the braided tensor product algebra
$B \underline{\otimes} \ldots \underline{\otimes} B \ $- the
notation:
$$
I \otimes ... \otimes I \otimes b_{i}^{\pm} \otimes I \otimes I
\otimes ... \otimes I \equiv b_{i}^{(r) \pm}
$$
where in the left hand side of the above there are $p$-factors,
$b_{i}^{\pm}$ is placed on the $r$-th place and the identity
element $I$ of the bosonic algebra $B$ is everywhere else, for all
values of $i = 1, 2, ... \ $ and $r = 1, 2, ... , p \ $. The
elements $b_{i}^{(r) \pm}$ together with the identity $I \otimes I
\otimes ... \otimes I$ constitute a set of generators of the
braided tensor product algebra $B \underline{\otimes} \ldots
\underline{\otimes} B$ ($p$-copies of $B$). We proceed now to
compute the multiplication in the braided tensor product algebra.
We have for all values of $i , j = 1, 2, ... \ $ and $r = 1, 2,
... , p \ $
$$
\begin{array}{c}
  b_{i}^{(r) -}b_{j}^{(r) +} = I \otimes ... \otimes I \otimes
b_{i}^{-}b_{j}^{+} \otimes I \otimes I \otimes ... \otimes I \\
    \\
 b_{j}^{(r) +}b_{i}^{(r) -} = I \otimes ... \otimes I \otimes
b_{j}^{+}b_{i}^{-} \otimes I \otimes I \otimes ... \otimes I  \\
\end{array}
$$
where on the right hand side of the above equations
$b_{i}^{-}b_{j}^{+}$ and $b_{j}^{+}b_{i}^{-}$ respectively, lie on
the $r$-th place. Subtracting the above equations by parts finally
gives:
\begin{equation} \label{BrGrAns11}
\begin{array}{c}
  [b_{i}^{(r) \ -}, b_{j}^{(r) \ +}] = I \otimes ... \otimes I
\otimes [b_{i}^{-}, b_{j}^{+}] \otimes I \otimes I \otimes ...
\otimes I = \\
    \\
 = I \otimes ... \otimes I \otimes  \delta_{ij}I \otimes
I \otimes I \otimes ... \otimes I = \delta_{ij}(I \otimes ...
 \otimes I) \\
\end{array}
\end{equation}
In the right hand side of the above $I \otimes ... \otimes I$ is
the identity element of the braided tensor product algebra $B
\underline{\otimes} \ldots \underline{\otimes} B$.  In the same
way we also have:
\begin{equation} \label{BrGrAns12}
[b_{i}^{(r) \ +}, b_{j}^{(r) \ +}] = [b_{i}^{(r) \ -}, b_{j}^{(r)
\ -}] = 0
\end{equation}
in $B \underline{\otimes} \ldots \underline{\otimes} B$. \\
Proceeding in the same way, we compute for all values of $i , j =
1, 2, ... \ $ and $r , s = 1, 2, ... , p \ $ with $r < s$ :
$$
\begin{array}{c}
  b_{i}^{(r) -}b_{j}^{(s) +} = I \otimes ... \otimes I \otimes
b_{i}^{-} \otimes I \otimes ... \otimes I \otimes b_{j}^{+}
\otimes I \otimes ... \otimes I \\
    \\
 b_{j}^{(s) +}b_{i}^{(r) -} = - I \otimes ... \otimes I \otimes
b_{i}^{-} \otimes I \otimes ... \otimes I \otimes b_{j}^{+}
\otimes I \otimes ... \otimes I   \\
\end{array}
$$
On the right hand side of the above equations, the elements
$b_{i}^{-}$ and $b_{j}^{+}$ lie on the $r$-th and the $s$-th
places respectively. The minus sign in the second of the above
equations is inserted due to the braiding \eqref{symmbraid} of
\seref{1}: The reason is that since $r < s$, when computing the
product $b_{j}^{(s) +}b_{i}^{(r) -}$ in $B \underline{\otimes}
\ldots \underline{\otimes} B \ $, the elements $b_{j}^{+}$ and
$b_{i}^{-}$ have to interchange their order according to the
braiding \eqref{symmbraid}:
$$
...\otimes \Psi_{B,B} \otimes ...(... \otimes b_{j}^{+} \otimes
b_{i}^{-} \otimes ...) = - ... \otimes b_{i}^{-} \otimes b_{j}^{+}
\otimes ...
$$
In the above relation: $|b_{i}^{-}| = |b_{j}^{+}| = 1 \ $, since
the elements $b_{i}^{-}$ and $b_{j}^{+}$ are odd elements in the
$\mathbb{Z}_{2}$-gradation of the bosonic algebra $B$. Addition of
the above mentioned equations  by parts  produces:
\begin{equation} \label{BrGrAns21}
\{ b_{i}^{(r) -}, b_{j}^{(s) +} \} = 0
\end{equation}
In exactly the same way we compute:
\begin{equation} \label{BrGrAns22}
\{ b_{i}^{(r) -}, b_{j}^{(s) -} \} = \{ b_{i}^{(r) +}, b_{j}^{(s)
+} \} = 0
\end{equation}
for all values of $r, s = 1, 2, ... ,p$ and $r \neq s$. \\
Now it is straightforward to check that relations
\eqref{BrGrAns11} and \eqref{BrGrAns12} coincide with relations
\eqref{GrAns1} of \seref{GrAns}, under the mapping $b_{i}^{(r)
\pm} \leftrightarrow b_{i}^{\alpha \ \pm}$ for $\ r = \alpha$,
while at the same time relations \eqref{BrGrAns21} and
\eqref{BrGrAns22} coincide with relations \eqref{GrAns2} of
\seref{GrAns}, under the same mapping for $\ r = \alpha$ and $\ s
= \beta$. \\
This completes the proof.
\end{proof}
We should underline here that the above mentioned isomorphism is
actually much different than Green's original idea \cite{Green}
about the nature of his ansatz:  By that time the idea of
$\mathbb{Z}_{2}$-graded algebras and $\mathbb{Z}_{2}$-graded
tensor products was a well known fact for mathematicians (see
\cite{Che}) -although rather as a special possibility of forming
superalgebras and tensor products than as an example of the more
general conceptual framework of the braiding- but it was not a
mainstream idea in physics. Green conjectured that his ansatz
should be a subalgebra of the usual tensor product algebra between
$p$-copies of the bosonic algebra $B$ and a copy of a
Clifford-like algebra. For a formulation of Green's original
interpretation one should also see \cite{Rob}. \\

The above proposition permits us a braided reinterpretetion of the
Green's ansatz device. Let us start with a restatement of the
associative superalgebra homomorphism stated in equation
\eqref{GrAnsRepr} of \seref{GrAns}: We start by inductively
defining: $\underline{\Delta}^{(1)} = \underline{\Delta}$ where
$\underline{\Delta}: P_{B} \rightarrow P_{B} \underline{\otimes}
P_{B} \ \ $ is the coproduct of the Parabosonic super-Hopf algebra
$P_{B}$ stated in \prref{superHopfPb} and:
\begin{equation} \label{tensprsupcomult}
\underline{\Delta}^{(p)} = (\underline{\Delta} \otimes Id \otimes
... \otimes Id) \circ \underline{\Delta}^{(p-1)}
\end{equation}
for any integer $\ p \geq 2$. $Id: P_{B} \rightarrow P_{B}$ is the
identity map and there are $p-1$ copies of it on the tensor
product map of the right hand side of the above equation. It is
obvious that $\underline{\Delta}^{(p)}: P_{B} \rightarrow P_{B}
\underline{\otimes} ... \underline{\otimes} P_{B}$ ($P_{B}$
appearing $p+1$ times) is an associative superalgebra
homomorphism. Furthermore we can inductively prove the following
Lemma:
\begin{lemma} \lelabel{Propertensprocomulti}
For any $p \geq 2$ and for any $i = 0, 1, ..., p-1$ we have the
following:
$$
\underline{\Delta}^{(p)} = (Id^{i} \otimes \underline{\Delta}
\otimes Id^{p-i-1}) \circ \underline{\Delta}^{(p-1)}
$$
\end{lemma}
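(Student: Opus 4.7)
The plan is to prove the lemma by induction on $p$, with the base case $p=2$ being precisely super-coassociativity of $\underline{\Delta}$ (which is one of the axioms already verified in the proof of \prref{superHopfPb}). For $p=2$, the only non-trivial value is $i=1$, where the claim reads $\underline{\Delta}^{(2)} = (Id \otimes \underline{\Delta})\circ\underline{\Delta}$; this is equal to the defining expression $(\underline{\Delta}\otimes Id)\circ\underline{\Delta}$ exactly by coassociativity.

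For the inductive step, assume the lemma for $p-1$. The case $i=0$ is nothing but the definition \equref{tensprsupcomult}. For $1 \leq i \leq p-1$, I would invoke the inductive hypothesis with $i-1 \in \{0, \ldots, p-2\}$ to rewrite
$$\underline{\Delta}^{(p-1)} = (Id^{i-1} \otimes \underline{\Delta} \otimes Id^{p-1-i}) \circ \underline{\Delta}^{(p-2)},$$
and hence
$$\underline{\Delta}^{(p)} = (\underline{\Delta}\otimes Id^{p-1}) \circ (Id^{i-1}\otimes\underline{\Delta}\otimes Id^{p-1-i}) \circ \underline{\Delta}^{(p-2)}.$$
The crux is then the ``sliding'' identity
$$(\underline{\Delta}\otimes Id^{p-1}) \circ (Id^{i-1}\otimes\underline{\Delta}\otimes Id^{p-1-i}) = (Id^{i}\otimes\underline{\Delta}\otimes Id^{p-1-i}) \circ (\underline{\Delta}\otimes Id^{p-2}).$$
Combining this with the definition $\underline{\Delta}^{(p-1)} = (\underline{\Delta}\otimes Id^{p-2})\circ\underline{\Delta}^{(p-2)}$ would yield the desired equality and close the induction.

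To justify the sliding identity, I would split on $i$. For $i \geq 2$, the two copies of $\underline{\Delta}$ act on disjoint tensor factors (positions $1$ and $i$ of the intermediate $P_B^{\otimes(p-1)}$), so the identity is a direct instance of the bifunctoriality of the tensor product: both sides are simply $\underline{\Delta}\otimes Id^{i-2}\otimes\underline{\Delta}\otimes Id^{p-1-i}$. For $i=1$, both sides reduce, after factoring out $Id^{p-2}$, to $(\underline{\Delta}\otimes Id)\circ\underline{\Delta}$ and $(Id\otimes\underline{\Delta})\circ\underline{\Delta}$ respectively, which are equal by super-coassociativity.

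I do not expect a conceptual obstacle here — the whole statement is essentially the higher coassociativity of $\underline{\Delta}$, valid in any (braided) bialgebra — but the main nuisance is the careful index bookkeeping in the sliding identity, and in particular making sure to isolate the boundary case $i=1$ where the two copies of $\underline{\Delta}$ do interact on the first tensor factor and where coassociativity is genuinely needed rather than pure bifunctoriality.
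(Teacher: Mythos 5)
Your proof is correct and follows exactly the route the paper indicates but does not write out: the paper merely asserts that the lemma "can be inductively proved," and your induction on $p$, reducing the inductive step to the interchange law for tensor products of maps plus super-coassociativity of $\underline{\Delta}$ (established in \prref{superHopfPb}) in the boundary case $i=1$, is a complete and correct realization of that plan. No gaps.
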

What the above lemma implies is that the definition of
$\underline{\Delta}^{(p)}$ stated in equation
\eqref{tensprsupcomult} is actually independent of the position
where $\underline{\Delta}$ is placed in  the tensor product of the
right hand side. Now we can straightforwardly compute:
\begin{equation} \label{comultparabp}
\underline{\Delta}^{(p-1)}(B_{i}^{\pm}) = \sum_{r=1}^{p} I \otimes
... \otimes I \otimes B_{i}^{\pm} \otimes I \otimes I \otimes ...
\otimes I
\end{equation}
where there are $p$-factors in the tensor product of the right
hand side and the generator $B_{i}^{\pm}$ of $P_{B}$ is placed in
the $r$-th place for $r = 1, 2, ..., p$ and $i = 1, 2, ... \ $.
\begin{corollary} \colabel{braidedFocklike}
Under the isomorphism of \prref{braidGrAns}, the homomorphism
$\pi_{p} : P_{B} \rightarrow G(p)$ stated in equation
\eqref{GrAnsRepr} is actually given by:
\begin{equation} \label{GrAnsreprtenspr}
\pi_{p} = (\phi \otimes ... \otimes \phi) \circ
\underline{\Delta}^{(p-1)} = (\pi_{1} \otimes ... \otimes \pi_{1})
\circ \underline{\Delta}^{(p-1)}
\end{equation}
where $p \geq 2$, $ \ \phi$ is the replacement map defined in
\prref{parabosonstobosons} and there are $p$-copies of it on the
right hand side of the above formula.
\end{corollary}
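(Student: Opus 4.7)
The plan is to verify the identity by evaluating both sides on the generators $B_{i}^{\pm}$ of $P_{B}$ and invoking the fact that both sides are superalgebra homomorphisms, so that agreement on a generating set determines them globally.

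First I would observe that the right-hand side of \eqref{GrAnsreprtenspr} is a composition of superalgebra homomorphisms: by \prref{superHopfPb}, $\underline{\Delta}$ is an even algebra homomorphism into the braided tensor product $P_{B}\underline{\otimes}P_{B}$, and by iterating as in \eqref{tensprsupcomult} (with well-definedness guaranteed by \leref{Propertensprocomulti}) the map $\underline{\Delta}^{(p-1)}:P_{B}\to P_{B}\underline{\otimes}\cdots\underline{\otimes}P_{B}$ is itself a superalgebra homomorphism into the $p$-fold braided tensor product. Tensoring $p$ copies of the even algebra epimorphism $\phi:P_{B}\to B$ from \prref{parabosonstobosons} yields an even homomorphism $\phi\otimes\cdots\otimes\phi:P_{B}\underline{\otimes}\cdots\underline{\otimes}P_{B}\to B\underline{\otimes}\cdots\underline{\otimes}B$, since the braided tensor product construction \eqref{braidedtenspr} is functorial with respect to even maps. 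On the other hand, $\pi_{p}$ is a superalgebra homomorphism by the content of \seref{GrAns}. Thus both sides are even algebra homomorphisms $P_{B}\to G(p)\cong B\underline{\otimes}\cdots\underline{\otimes}B$.

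Next I would carry out the one calculation that is actually needed: evaluating $(\phi\otimes\cdots\otimes\phi)\circ\underline{\Delta}^{(p-1)}$ on a generator $B_{i}^{\pm}$. Using formula \eqref{comultparabp} for $\underline{\Delta}^{(p-1)}(B_{i}^{\pm})$ and applying $\phi\otimes\cdots\otimes\phi$ termwise, one obtains
\[
(\phi\otimes\cdots\otimes\phi)\bigl(\underline{\Delta}^{(p-1)}(B_{i}^{\pm})\bigr)
= \sum_{r=1}^{p} I\otimes\cdots\otimes I\otimes b_{i}^{\pm}\otimes I\otimes\cdots\otimes I
= \sum_{r=1}^{p} b_{i}^{(r)\,\pm},
\]
where the notation $b_{i}^{(r)\,\pm}$ is the one introduced in the proof of \prref{braidGrAns}. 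Under the identification $b_{i}^{(r)\,\pm}\leftrightarrow b_{i}^{\alpha\,\pm}$ for $r=\alpha$ coming from the isomorphism \eqref{isomGrAnsbraidtenspro}, this is precisely $\sum_{\alpha=1}^{p}b_{i}^{\alpha\,\pm}=\pi_{p}(B_{i}^{\pm})$ by \eqref{GrAnsRepr}.

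Since $P_{B}$ is generated as an algebra by the $B_{i}^{\pm}$ together with the unit, and since both $\pi_{p}$ and $(\phi\otimes\cdots\otimes\phi)\circ\underline{\Delta}^{(p-1)}$ are superalgebra homomorphisms that send the unit of $P_{B}$ to the unit of $G(p)$ and agree on each $B_{i}^{\pm}$, they must coincide on all of $P_{B}$. This yields the desired equality, and substituting $\pi_{1}=\phi$ from the remark following \eqref{GrAnsRepr} gives the second form. There is no real obstacle here; the only subtle point worth checking carefully is that the functoriality of the braided tensor product requires the even parity of $\phi$, which is exactly what \prref{parabosonstobosons} provides.
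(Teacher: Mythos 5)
Your proof is correct and follows essentially the same route as the paper: the paper sets up formula \eqref{comultparabp} precisely so that applying $\phi\otimes\cdots\otimes\phi$ termwise yields $\sum_{r=1}^{p}b_{i}^{(r)\,\pm}$, which under the identification of \prref{braidGrAns} is $\pi_{p}(B_{i}^{\pm})$, and the equality of the two superalgebra homomorphisms then follows from agreement on generators. Your explicit remarks on the evenness of $\phi$ and the role of \leref{Propertensprocomulti} only make precise what the paper leaves implicit.
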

If we consider the bosonic Fock representation ${}_{B}F$, this is
a $\mathbb{Z}_{2}$-graded complex vector space and at the same
time a $\mathbb{Z}_{2}$-graded $B$-module (we can equivalently
say: a braided $B$-module, where the braiding is given by
\eqref{symmbraid}). Applying the discussion at the end of
\seref{11} we immediately get that the tensor product vector space
${}_{B}F \otimes ... \otimes {}_{B}F$ ($p$-copies) becomes a
braided ($\mathbb{Z}_{2}$-graded) module over the braided tensor
product algebra $B \underline{\otimes} ... \underline{\otimes} B$
($p$-copies). We can straightforwardly check that it is a cyclic
module generated by any of it's elements, thus a simple module (an
irreducible representation). Under the isomorphism of
\prref{braidGrAns} this irreducible, braided,
($\mathbb{Z}_{2}$-graded) $B \underline{\otimes} ...
\underline{\otimes} B$-module coincides with the $G(p)$-module
used by Greenberg and Messiah and specified by equations
\eqref{GrAnsvac}, \eqref{GrAnsvac1}. Furthermore, the
$\mathbb{Z}_{2}$-graded complex vector space ${}_{B}F \otimes ...
\otimes {}_{B}F$ readily becomes a braided
($\mathbb{Z}_{2}$-graded) module over the braided tensor product
algebra $P_{B} \underline{\otimes} ... \underline{\otimes} P_{B}$
($p$-copies) through the superalgebra homomorphism:
\begin{equation} \label{tensproff}
\phi \otimes ... \otimes \phi : P_{B} \underline{\otimes} ...
\underline{\otimes} P_{B} \rightarrow B \underline{\otimes} ...
\underline{\otimes} B
\end{equation}
We note here that since $\phi$ is an epimorphism of superalgebras
(see \prref{parabosonstobosons}) $\phi \otimes ... \otimes \phi$
is also an epimorphism of superalgebras. This in turn implies that
since ${}_{B}F \otimes ... \otimes {}_{B}F$ is an irreducible $B
\underline{\otimes} ... \underline{\otimes} B$ module, it is also
an irreducible $P_{B} \underline{\otimes} ... \underline{\otimes}
P_{B}$ module (the situation being analogous to the one described
in \coref{Parabreprfrombosrepr}).

Finally, the space ${}_{B}F \otimes ... \otimes {}_{B}F$ becomes a
braided ($\mathbb{Z}_{2}$-graded) module over the super-Hopf
algebra $P_{B}$ via the superalgebra homomorphism:
\begin{equation} \label{pcomult}
\underline{\Delta}^{(p-1)} : P_{B} \rightarrow P_{B}
\underline{\otimes} ... \underline{\otimes} P_{B}
\end{equation}
This is a reducible representation of the parabosonic algebra
$P_{B}$, which obviously coincides with the tensor product of
$p$-copies of the braided $P_{B}$-module ${}_{P_{B}}F$, where the
tensor product is constructed according to the method described in
the end of \seref{2}.

What the above discussion finally implies is that Greenberg's and
Messiah's use of the Green's ansatz which led to the
classification of the Fock-like representations ${}_{P_{B}}F(p)$
of the parabosonic algebra (see \thref{FocklikereprPb}) was in
fact nothing else than a systematic study of the reduction of the
tensor product representations of the parabosonic algebra. This is
summarized in the following corollary:
\begin{corollary} \colabel{tensprred}
For every specific value of $p = 1, 2, ... \ $, the vector space
$$
{}_{P_{B}}F \otimes ... \otimes {}_{P_{B}}F
$$
($p$-copies of ${}_{P_{B}}F$), becomes a braided
($\mathbb{Z}_{2}$-graded) $P_{B}$-module through the iterated
comultiplication \eqref{pcomult}. This is a reducible
$P_{B}$-module. It contains an irreducible braided
($\mathbb{Z}_{2}$-graded) $P_{B}$-submodule, which is generated
(as a submodule) by the vacuum vector $|0> \otimes ... \otimes
|0>$. This submodule is exactly the parabosonic Fock-like
representation ${}_{P_{B}}F(p)$.
\end{corollary}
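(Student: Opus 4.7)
The plan is to reduce the statement to \thref{FocklikereprPb}: once the vacuum $v_{0} := |0> \otimes \cdots \otimes |0>$ is exhibited as a vector in the $p$-fold tensor product that satisfies the characterizing conditions of the parabosonic Fock-like representation of order $p$, the $P_{B}$-submodule it generates will automatically be unitarily equivalent to ${}_{P_{B}}F(p)$.

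First, I would pin down the $P_{B}$-action on $V := {}_{P_{B}}F \otimes \cdots \otimes {}_{P_{B}}F$ ($p$ copies) coming from the iterated super-comultiplication $\underline{\Delta}^{(p-1)}$, according to the recipe at the end of \seref{2}. Since ${}_{P_{B}}F = {}_{P_{B}}F(1)$ coincides, via pullback along $\phi$, with the bosonic Fock module ${}_{B}F$, the space $V$ is the same $\mathbb{Z}_{2}$-graded vector space as the $p$-fold tensor product ${}_{B}F \otimes \cdots \otimes {}_{B}F$. Moreover, by \coref{braidedFocklike} combined with \prref{braidGrAns}, the $P_{B}$-action factors as the composition $\pi_{p} = (\phi \otimes \cdots \otimes \phi) \circ \underline{\Delta}^{(p-1)}$, so $V$ is the pullback along $\pi_{p}$ of the Greenberg--Messiah reducible $G(p)$-module described in \seref{GrAns}.

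Next, I would verify on $v_{0}$ the two conditions of \thref{FocklikereprPb}. From \eqref{comultparabp} the generator $B_{i}^{-}$ acts as $\sum_{r=1}^{p} b_{i}^{(r)-}$, and since $b_{i}^{-}|0> = 0$ in every tensor slot, $B_{i}^{-} v_{0} = 0$. For the central identity $B_{i}^{-}B_{j}^{+} v_{0} = p\delta_{ij} v_{0}$, I would use that $\pi_{p}$ is an algebra homomorphism to expand $B_{i}^{-}B_{j}^{+} v_{0} = \sum_{r,s} b_{i}^{(r)-} b_{j}^{(s)+} v_{0}$. The cross terms ($r \neq s$) vanish by \eqref{GrAns2}: the anticommutation relation $\{b_{i}^{(r)-}, b_{j}^{(s)+}\} = 0$ lets one push the annihilator to the right at the cost of a sign, leaving $\pm b_{j}^{(s)+} (b_{i}^{(r)-} v_{0}) = 0$. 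Each diagonal term ($r = s$) reduces by \eqref{GrAns1} to the ordinary bosonic CCR computation $b_{i}^{-} b_{j}^{+}|0> = \delta_{ij}|0>$ on the $r$-th slot; summing over $r$ yields $p\delta_{ij} v_{0}$. Hilbert adjointness $(B_{i}^{-})^{\dagger} = B_{i}^{+}$ on the submodule is inherited from the same property of the bosonic generators through the natural tensor product inner product.

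With these vacuum conditions matched, \thref{FocklikereprPb} identifies the $P_{B}$-submodule generated by $v_{0}$ with ${}_{P_{B}}F(p)$. That $V$ itself is reducible for $p \geq 2$ follows because, by \eqref{comultparabp}, every vector reached from $v_{0}$ by iteratively applying parabosonic creation operators is a fully symmetric sum over the $p$ tensor slots, whereas vectors such as $b_{i}^{(1)+} v_{0}$, which break this symmetry, still lie in $V$; hence the submodule generated by $v_{0}$ is proper. The main obstacle I anticipate is the careful bookkeeping of the signs introduced by the braiding \eqref{symmbraid} in the computation of $B_{i}^{-}B_{j}^{+} v_{0}$, since each slot-swap of two odd generators contributes a $(-1)$, and these signs must conspire precisely with the bosonic commutation relation in the $r = s$ case and with the $G(p)$ anticommutator in the $r \neq s$ case to produce the stated factor $p\delta_{ij}$.
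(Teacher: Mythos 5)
Your proposal is correct and follows essentially the same route as the paper: identify ${}_{P_{B}}F \otimes \cdots \otimes {}_{P_{B}}F$ with the Greenberg--Messiah $G(p)$-module via \prref{braidGrAns} and the factorization $\pi_{p} = (\phi \otimes \cdots \otimes \phi)\circ \underline{\Delta}^{(p-1)}$ of \coref{braidedFocklike}, then invoke \thref{FocklikereprPb} to recognize the cyclic submodule generated by $|0>\otimes\cdots\otimes|0>$ as ${}_{P_{B}}F(p)$. The only difference is that you verify the vacuum conditions $B_{i}^{-}v_{0}=0$ and $B_{i}^{-}B_{j}^{+}v_{0}=p\delta_{ij}v_{0}$ and the properness of the submodule explicitly, where the paper simply cites the Greenberg--Messiah computation \equref{FocklikePb}; both computations are consistent.
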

For a similar discussion, but for the case of the finite degree's
of freedom i.e. for the algebra $P_{B}^{(n)}$, see also
\cite{Pal3}, \cite{AnPop}.

Let us note here, that the systematic study of the decomposition
of the ${}_{P_{B}}F \otimes ... \otimes {}_{P_{B}}F \ $
$P_{B}$-module, i.e. the investigation of it's structure and any
other submodules it may contain, is an unsolved problem until
nowadays.

\section{Self-contained sets and generalizations of Green's
ansatz} \selabel{Generalizations}

As we have already explained in \seref{GrAns},
\thref{FocklikereprPb}, Greenberg's and Messiah's study of the
parabosonic Fock-like representations concluded to the fact that
for the parabosonic algebra $P_{B}$, the requirements
$$
\begin{array}{cccc}
  B_{i}^{-} |0> = 0 &  &  & (B_{i}^{-})^{\dagger} = B_{i}^{+} \\
\end{array}
$$
for all $i =1, 2, ... \ $, specify -up to isomorphism- an infinite
collection of irreducible representations, labelled by a positive
integer $p$ and denoted by ${}_{P_{B}}F(p)$. Each of them is
further specified by the condition
$$
B_{i}^{-}B_{j}^{+}|0> = p \delta_{ij} |0>
$$
for all $i, j = 1, 2, ... \ $ and for any positive, integer value
of $p$. Each one of these representations is isomorphic to the
corresponding one constructed -for the same value of $p$- through
the homomorphism: $\pi_{p} : P_{B} \rightarrow G(p)$ described by
equation \eqref{GrAnsRepr} or equivalently
\coref{braidedFocklike}.

In \cite{OhKa}, Ohnuki and Kamefuchi virtually inverted the
problem and posed the following question: Instead of looking for
representations of the $P_{B}$ algebra constructed through
homomorphisms to the Green's ansatz  algebras $G(p)$, they looked
for algebras $\Gamma(p)$ (where $p$ is a positive integer)
satisfying the following conditions: \label{conditions}
\begin{itemize}
\item[(1).] For each specific value of $p$, the corresponding
algebra $\Gamma(p)$ must be generated by a set of generators
denoted by $b_{i}^{\pm}$ ($i = 1, 2, ... \ $), satisfying all the
relations of the parabosonic algebra (among any other relations).

\item[(2).] Every $\Gamma(p)$ algebra must possess the following
property: Any $\Gamma(p)$-module subject for any $i,j = 1, 2, ...
$ to the relation
\begin{equation}
   b_{i}^{-}b_{j}^{+}|0> = q \delta_{ij} |0>
\end{equation}
for some unique vector $|0>$ of the representation space, with $q$
a complex number, must imply: $q = p$.
\end{itemize}
The first of the above conditions virtually means that each one of
the $\Gamma(p)$ algebras (for every positive integer value of $p$)
must be a quotient of the parabosonic algebra $P_{B}$, while the
second condition means that the order of the representation of the
parabosonic algebra $P_{B}$ must be ``absorbed" into the algebra
$\Gamma(p)$. The last statement is the reason that the $\Gamma(p)$
algebras (according to our notation) were named in \cite{OhKa} as
``\emph{self-contained sets of commutation relations}" or
``\emph{commutation relations specific to given order $p$}". In
other words: we are looking for an infinite family of quotients of
the parabosonic algebra $P_{B}$, labelled by a positive integer
$p$, each one of them admitting only one Fock-like representation.

Ohnuki and Kamefuchi, solved the problem for small positive
integer values of $p$ for the parafermionic algebra. In the case
of the parabosonic algebra, they only solved it for $p = 1$
concluding that $\Gamma(1) = B \ $ is no other than the familiar
bosonic algebra. They also conjectured the solution for the values
$p = 2$ and $p = 3$ in the case of the parabosonic algebra. We
state here their conjecture for the case of the $\Gamma(2)$
algebra: It is specified in terms of generators and relations by
the following:
\begin{equation} \label{Gamma2}
\begin{array}{cc}
  \langle b_{k}^{-}, b_{l}^{+}, b_{m}^{-} \rangle_{-}  = 2\delta_{kl}b_{m}^{-} -
  2\delta_{lm}b_{k}^{-},
  &
  \langle b_{m}^{+}, b_{l}^{-}, b_{k}^{+} \rangle_{-}  = 2\delta_{kl}b_{m}^{+} - 2\delta_{lm}b_{k}^{+} \\
   & \\
    \langle b_{k}^{-}, b_{l}^{-}, b_{m}^{+} \rangle_{-} =
    2\delta_{lm}b_{k}^{-}, &
  \langle b_{m}^{-}, b_{l}^{+}, b_{k}^{+} \rangle_{-} = 2\delta_{lm}b_{k}^{+}  \\
   & \\
 \langle b_{k}^{-}, b_{l}^{-}, b_{m}^{-} \rangle_{-} = 0, & \langle b_{k}^{+}, b_{l}^{+}, b_{m}^{+} \rangle_{-} = 0  \\
\end{array}
\end{equation}
for all values of $k, l, m = 1, 2, ... \ $ and $\langle A, B, C
\rangle_{-}$ stands for $ABC-CBA$. \\

We are now going to prove that the $\Gamma(2)$ algebra, specified
in terms of generators and relations by \eqref{Gamma2}, is
actually the solution to the Ohnuki's-Kamefuchi's question, as
this is specified by conditions $(1)$ and $(2)$ stated above. In
other words we are going to prove that the $\Gamma(2)$ algebra is
the \emph{self contained set of commutation relations for $p=2$}
or the \emph{commutation relations specific to given order $p=2$}.
\begin{proposition} \prlabel{selfcontainedsets}
The $\Gamma(2)$ algebra specified in terms of generators and
relations by \eqref{Gamma2}, is a $\mathbb{Z}_{2}$-graded algebra
with it's generators $b_{i}^{\pm} \ $ ($i = 1, 2, ...$), being odd
elements. The $\Gamma(2)$ algebra is a quotient algebra of the
parabosonic algebra $P_{B}$. The ``replacement" map $f_{2}: P_{B}
\rightarrow \Gamma(2)$ defined by: $f_{2}(B_{i}^{\pm}) =
b_{i}^{\pm}$ is a $\mathbb{Z}_{2}$-graded algebra epimorphism
(i.e.: an even algebra epimorphism).
\end{proposition}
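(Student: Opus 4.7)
The plan is to mirror, almost verbatim, the proof of \prref{parabosonstobosons}, presenting $\Gamma(2)$ as the quotient $T(V_{X})/I_{\Gamma(2)}$, where $I_{\Gamma(2)}$ is the two-sided ideal of $T(V_{X})$ generated by the six families of expressions appearing on the left-hand side minus the right-hand side of \eqref{Gamma2}. The $\mathbb{Z}_{2}$-grading then comes for free: each such generator is an odd element of $T(V_{X})$, because on the $\mathbb{Z}_{2}$-level both a triple commutator $\langle X_{k}^{\bullet},X_{l}^{\bullet},X_{m}^{\bullet}\rangle_{-}$ (degree $3=1\bmod 2$) and a single generator $X_{i}^{\bullet}$ (degree $1$) are odd. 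Therefore $I_{\Gamma(2)}$ is a homogeneous ideal, the canonical projection $\pi_{\Gamma(2)}:T(V_{X})\to\Gamma(2)$ is an even algebra map, and the generators $b_{i}^{\pm}=\pi_{\Gamma(2)}(X_{i}^{\pm})$ are odd.

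The substantive step is to show the inclusion $I_{P_{B}}\subseteq I_{\Gamma(2)}$, equivalently that the parabosonic trilinear generator
\[
R_{i,j,k}^{\xi,\eta,\epsilon}=\bigl[\{X_{i}^{\xi},X_{j}^{\eta}\},X_{k}^{\epsilon}\bigr]-(\epsilon-\eta)\delta_{jk}X_{i}^{\xi}-(\epsilon-\xi)\delta_{ik}X_{j}^{\eta}
\]
lies in $I_{\Gamma(2)}$ for all choices of indices and signs. For this I would use the elementary identity
\[
\bigl[\{X_{i}^{\xi},X_{j}^{\eta}\},X_{k}^{\epsilon}\bigr]=\langle X_{i}^{\xi},X_{j}^{\eta},X_{k}^{\epsilon}\rangle_{-}+\langle X_{j}^{\eta},X_{i}^{\xi},X_{k}^{\epsilon}\rangle_{-},
\]
which is immediate from the definition $\langle A,B,C\rangle_{-}=ABC-CBA$, together with the antisymmetry $\langle A,B,C\rangle_{-}=-\langle C,B,A\rangle_{-}$ (this is how the two ``missing'' sign patterns $++-$ and $+--$ are reduced to the ones tabulated in \eqref{Gamma2}). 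Evaluating the two triple-commutators via the appropriate line of \eqref{Gamma2} and adding should, in each of the eight sign patterns $(\xi,\eta,\epsilon)\in\{\pm\}^{3}$, reproduce exactly the right-hand side $(\epsilon-\eta)\delta_{jk}X_{i}^{\xi}+(\epsilon-\xi)\delta_{ik}X_{j}^{\eta}$. I would check one representative case in detail (for instance $(\xi,\eta,\epsilon)=(-,+,-)$, which produces $-2\delta_{jk}X_{i}^{-}$ and uses both the $-+-$ and $--+$ rows of \eqref{Gamma2}) and leave the remaining seven as routine case analysis.

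Once $I_{P_{B}}\subseteq I_{\Gamma(2)}$ is established, the rest is formal and parallels \prref{parabosonstobosons}: the third isomorphism theorem gives
\[
P_{B}\Big/\bigl(I_{\Gamma(2)}/I_{P_{B}}\bigr)=\bigl(T(V_{X})/I_{P_{B}}\bigr)\Big/\bigl(I_{\Gamma(2)}/I_{P_{B}}\bigr)\cong T(V_{X})/I_{\Gamma(2)}=\Gamma(2),
\]
exhibiting $\Gamma(2)$ as a quotient superalgebra of $P_{B}$. The inclusion of ideals also means $\pi_{\Gamma(2)}$ factors uniquely through $\pi_{P_{B}}$ via an even algebra map $f_{2}:P_{B}\to\Gamma(2)$, determined on generators by $f_{2}(B_{i}^{\pm})=b_{i}^{\pm}$, and the commutative diagram
\[
\xymatrix{T(V_{X})\ar[rr]^{\pi_{\Gamma(2)}}\ar[dr]_{\pi_{P_{B}}}&&\Gamma(2)\\ & P_{B}\ar@{.>}[ur]_{\exists!\,f_{2}}&}
\]
together with $\ker f_{2}=I_{\Gamma(2)}/I_{P_{B}}$ yields the $\mathbb{Z}_{2}$-graded algebra isomorphism $\Gamma(2)\cong P_{B}/\ker f_{2}$, completing the proof that $f_{2}$ is an even epimorphism.

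The main obstacle is purely computational: the trilinear-identity check that $R_{i,j,k}^{\xi,\eta,\epsilon}\in I_{\Gamma(2)}$ for every sign pattern. Once one notices the key decomposition of $[\{X_{i}^{\xi},X_{j}^{\eta}\},X_{k}^{\epsilon}\bigr]$ into two $\langle\cdot,\cdot,\cdot\rangle_{-}$ brackets and the antisymmetry that recovers the unlisted sign patterns of \eqref{Gamma2}, each case is reduced to a short Kronecker-delta accounting, but there is no conceptual shortcut that avoids going through the list.
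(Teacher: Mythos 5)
Your proposal is correct and follows essentially the same route as the paper's own proof: present $\Gamma(2)$ as $T(V_{X})/I_{\Gamma(2)}$, observe the ideal is homogeneous, verify $I_{P_{B}}\subseteq I_{\Gamma(2)}$, and then repeat the factorization argument of \prref{parabosonstobosons} verbatim. The only difference is that where the paper merely asserts the paraboson relations can be ``straightforwardly checked'' in $\Gamma(2)$, you supply the explicit mechanism --- the decomposition $\bigl[\{A,B\},C\bigr]=\langle A,B,C\rangle_{-}+\langle B,A,C\rangle_{-}$ together with $\langle A,B,C\rangle_{-}=-\langle C,B,A\rangle_{-}$ --- which indeed closes that computation in all eight sign patterns.
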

\begin{proof}
Let us consider the $\Gamma(2)$ algebra as a quotient of the
tensor algebra $T(V_{X})$ via it's ideal generated by the elements
specified by relations \eqref{Gamma2}. Let us denote
$I_{\Gamma(2)}$ the corresponding ideal of $T(V_{X})$ and
$\pi_{\Gamma(2)}$ the natural projection:
$$
\pi_{\Gamma(2)}:T(V_{X}) \rightarrow \Gamma(2) = T(V_{X}) \Big/
I_{\Gamma(2)}
$$
It is immediate that since relations \eqref{Gamma2} are
homogeneous relations, the $I_{\Gamma(2)}$ ideal is generated by
homogeneous elements, thus $\Gamma(2)$ is a
$\mathbb{Z}_{2}$-graded algebra with it's generators being odd
elements. We can now straightforwardly check that the generators
of the $\Gamma(2)$ superalgebra satisfy the paraboson relations,
i.e.:
\begin{equation}
\begin{array}{c}
  \pi_{\Gamma(2)}( \big[ \{ X_{i}^{\xi},  X_{j}^{\eta}\},
X_{k}^{\epsilon}  \big] -
 (\epsilon - \eta)\delta_{jk}X_{i}^{\xi} - (\epsilon -
 \xi)\delta_{ik}X_{j}^{\eta} ) = \\
              \\
 = \big[ \{ b_{i}^{\xi},  b_{j}^{\eta}\},
b_{k}^{\epsilon}  \big] -
 (\epsilon - \eta)\delta_{jk}b_{i}^{\xi} - (\epsilon -
 \xi)\delta_{ik}b_{j}^{\eta} = 0 \\
\end{array}
\end{equation}
In other words, the above computation shows that, given generators
satisfying relations \eqref{Gamma2}, the same generators have to
satisfy the relations of the parabosonic algebra $P_{B}$ also. In
a more mathematical statement we can equivalently say that
$ker(\pi_{P_{B}}) \subseteq ker(\pi_{\Gamma(2)})$ or equivalently:
$I_{P_{B}} \subseteq I_{\Gamma(2)}$. The rest of the proof flows
exactly as in the case of \prref{parabosonstobosons}. $I_{P_{B}}
\subseteq I_{\Gamma(2)}$ implies that $\pi_{\Gamma(2)}$ is
uniquely extended to an even algebra epimorphism $f_{2}: P_{B}
\rightarrow \Gamma(2)$ through the commutative diagram:
\begin{displaymath}
\xymatrix{T(V_{X}) \ar[rr]^{\pi_{\Gamma(2)}}
\ar[dr]_{\pi_{P_{B}}} & & \Gamma(2) \\
 & P_{B} \ar@{.>}[ur]_{\exists ! \ f_{2}} & }
\end{displaymath}
where $f_{2}$ is completely determined by it's values on the
generators $B_{i}^{\pm}$ of $P_{B}$, i.e.: $f_{2}(B_{i}^{\pm}) =
b_{i}^{\pm}$. We also mention that: $kerf_{2} = I_{\Gamma(2)} /
I_{P_{B}} = \pi_{P_{B}}(I_{\Gamma(2)})$ and we finally have the
$\mathbb{Z}_{2}$-graded algebra isomorphism:
\begin{equation}
\Gamma(2) \cong P_{B} \Big/ ker f_{2}
\end{equation}
\end{proof}
The confirmation of the second of the conditions mentioned in page
\pageref{conditions}, is an easy exercise and can be found in
chapter $5$ of \cite{OhKa}.

Let us stress here, that according to the preceding discussion,
the $\Gamma(p)$ algebras arise as generalizations of the bosonic
algebra rather than the parabosonic algebra. The $\Gamma(p)$
algebra, admits at most one Fock-like representation for the
specific value of $p \ $, just as in the case of the bosonic
algebra there is only one Fock representation corresponding to $p
= 1$ and
described in \seref{bose}. \\

Based on the above interpretation of the $\Gamma(p)$ algebras we
can proceed in constructing a straightforward generalization of
the Green's ansatz device: Inspired by \prref{braidGrAns} and
\coref{braidedFocklike} we can carry the corresponding
construction using now the $\Gamma(2)$ algebra -described in
\prref{selfcontainedsets}- and the $f_{2}$ epimorphism, instead of
the bosonic algebra $B$ and the replacement map $\phi$ used in the
mainstream idea of the Green's ansatz device. Our generalization
of the Green's ansatz consists of the algebra:
\begin{equation} \label{Greengener}
\Gamma(2) \underline{\otimes} \Gamma(2) \underline{\otimes} ...
\underline{\otimes} \Gamma(2)
\end{equation}
where there are $p$ copies of $\Gamma(2)$ in the braided
($\mathbb{Z}_{2}$-graded) tensor product and the homomorphism:
\begin{equation} \label{Greengenerhom}
(f_{2} \otimes ... \otimes f_{2}) \circ
\underline{\Delta}^{(p-1)}: P_{B} \rightarrow \Gamma(2)
\underline{\otimes} \Gamma(2) \underline{\otimes} ...
\underline{\otimes} \Gamma(2)
\end{equation}
where $p \geq 2$, $ \ f_{2}$ is the projection epimorphism defined
in \prref{selfcontainedsets} and there are $p$-copies of it on the
above formula.

We are now going to describe the braided ($\mathbb{Z}_{2}$-graded)
tensor product algebra \eqref{Greengener} in terms of generators
and relations. Let us use -for the elements of $\Gamma(2)
\underline{\otimes} \ldots \underline{\otimes} \Gamma(2) \ $- a
notation analogous to the one used in the proof of
\prref{braidGrAns}, i.e. we denote the elements of
\eqref{Greengener} as:
\begin{equation} \label{genandrelforGreengener}
I \otimes ... \otimes I \otimes b_{i}^{\pm} \otimes I \otimes I
\otimes ... \otimes I \equiv b_{i}^{(r) \pm}
\end{equation}
where in the left hand side of the above there are $p$-factors,
$b_{i}^{\pm} \in \Gamma(2)$ is placed on the $r$-th place and the
identity element $I$ of the $\Gamma(2)$ algebra is everywhere
else, for all values of $i = 1, 2, ... \ $ and $r = 1, 2, ... , p
\ $. The elements $b_{i}^{(r) \pm}$ together with the identity $I
\otimes I \otimes ... \otimes I$ constitute a set of generators of
the braided tensor product algebra $\Gamma(2) \underline{\otimes}
\ldots \underline{\otimes} \Gamma(2) \ $ ($p$-copies of
$\Gamma(2)$). Using the notation specified in
\eqref{genandrelforGreengener} and the braiding \eqref{symmbraid}
of \seref{1}, we have the following relations:
\begin{equation} \label{Greengenerintermsofgenerandrel}
\begin{array}{cc}
  \langle b_{k}^{(r)-}, b_{l}^{(r)+}, b_{m}^{(r)-} \rangle_{-}  = 2\delta_{kl}b_{m}^{(r)-} -
  2\delta_{lm}b_{k}^{(r)-},
  & \langle b_{k}^{(r)-}, b_{l}^{(r)-}, b_{m}^{(r)+} \rangle_{-} =
    2\delta_{lm}b_{k}^{(r)-}   \\
   & \\
  \langle b_{m}^{(r)+}, b_{l}^{(r)-}, b_{k}^{(r)+} \rangle_{-}  = 2\delta_{kl}b_{m}^{(r)+} - 2\delta_{lm}b_{k}^{(r)+}, &
  \langle b_{m}^{(r)-}, b_{l}^{(r)+}, b_{k}^{(r)+} \rangle_{-} = 2\delta_{lm}b_{k}^{(r)+}  \\
   & \\
 \langle b_{k}^{(r)-}, b_{l}^{(r)-}, b_{m}^{(r)-} \rangle_{-} = 0, & \langle b_{k}^{(r)+}, b_{l}^{(r)+}, b_{m}^{(r)+}
 \rangle_{-} = 0  \\
   &   \\
 \{ b_{k}^{(r) -}, b_{l}^{(s) -} \} = \{ b_{k}^{(r) +}, b_{l}^{(s)
+} \} = 0 &  \{ b_{k}^{(r) -}, b_{l}^{(s) +} \} = 0  \\
\end{array}
\end{equation}
which for all values of $k, l = 1, 2, ... \ $, and $r, s = 1, 2,
... p \ $ with $r \neq s$ completely specify the braided
($\mathbb{Z}_{2}$-graded) tensor product algebra
\eqref{Greengener} in terms of generators and relations. It is
worth noting that we have constructed an algebra which mixes both
bilinear and trilinear relations. Finally we can immediately
conclude that the homomorphism \eqref{Greengenerhom} has the same
functional form as in the usual Green ansatz
\begin{equation}
(f_{2} \otimes ... \otimes f_{2}) \circ
\underline{\Delta}^{(p-1)}(B_{i}^{\pm}) = \sum_{\alpha = 1}^{p}
b_{i}^{\alpha \ \pm}
\end{equation}
but the elements on the right hand side of the above are no more
``anticommuting" bosons -as in the usual Green's ansatz case- but
elements of the $\Gamma(2) \underline{\otimes} \ldots
\underline{\otimes} \Gamma(2) \ $ algebra satisfying relations
\eqref{Greengenerintermsofgenerandrel}.

\section{Discussion}
The explicit construction of the parabosonic Fock-like
representation ${}_{P_{B}}F(p)$ for $p > 1$ is an unsolved problem
until nowadays. The problem is yet unsolved even for the case of
the finite degrees of freedom, i.e. even for the algebra
$P_{B}^{(n)}$ for $n > 1$ and $p > 1$ there are no explicit
expressions in the bibliography for the matrix elements of the
${}_{P_{B}^{(n)}}F(p)$ module. Even the -simpler- problems of the
construction of an orthonormal basis for the ${}_{P_{B}}F(p)$ or
the ${}_{P_{B}^{(n)}}F(p)$ modules are open for the case $p > 1$
and $n > 1$. We should note here that a very interesting paper has
appeared \cite{LiStVdJ} -by the time the writing of this work was
almost completed- where some of the above problems are dealt with,
for the case of finite degrees of freedom: The authors consider
the module ${}_{P_{B}^{(n)}}F(p)$ for $n > 1$ and $p > 1$. They
use techniques of induced representations, the well known
\cite{Pal} isomorphism of $P_{B}^{(n)}$ with the Lie superalgebra
$osp(1/2n)$ and elements from the representation theory of
$osp(1/2n)$. They construct an orthogonal Gelfand-Zetlin basis of
${}_{P_{B}^{(n)}}F(p)$ and they calculate explicitly the
corresponding matrix elements. However, their techniques do not
give an answer for the general case of the ${}_{P_{B}}F(p)$ module
with $p > 1$ since the representation theory of the infinite
dimensional Lie superalgebras is yet an unexplored subject. On the
other hand, the braided interpretation of the Green's ansatz
device presented here indicates a possible way for a general
solution of the above mentioned problem. Our approach indicates
that the Green's ansatz algebras $G(p)$ for $p = 1, 2, ... \ $,
should be ``utilized" as $\mathbb{Z}_{2}$-graded algebras (with
their generators being odd elements). Their braided
($\mathbb{Z}_{2}$-graded), irreducible modules have been shown to
give rise to braided ($\mathbb{Z}_{2}$-graded), reducible tensor
product modules of the parabosonic algebra $P_{B}$. The role of
the super-Hopf structure of the $P_{B}$ algebra is essential in
this process. These $P_{B}$-modules finally must be reduced to
their irreducible constituents. The isomorphism of proposition
\prref{braidGrAns} provides us with an analytic tool for such a
calculation to be performed. It will be a very interesting thing
to proceed with the explicit computations, to compare the results
-for the case of the finite degrees of freedom- with the
corresponding results of \cite{LiStVdJ} and to extract the matrix
elements for the general case of the $P_{B}$ algebra with an
infinite number of parabosons. This will be the subject of a
forthcoming work.

Furthermore, we stress here that the above mentioned approach
admits straightforward generalisations for the case of algebras
which describe mixed systems of paraparticles such as the relative
parabose or the relative parafermi sets (see \cite{GreeMe} for
their description and for generalized versions of the Green's
ansatz for these algebras): The relative parafermi set has been
shown to be a $\mathbb{Z}_{2}$-graded algebra \cite{Pal6} while
the relative parabose set has been shown to be a $\mathbb{Z}_{2}
\times \mathbb{Z}_{2}$-graded \cite{Ya} algebra. It would thus be
an interesting idea to apply similar techniques to these algebras
and obtain results about their graded Hopf structure, their
braided representations and their braided tensor products. Such
results, combined with suitable generalizations of
\prref{braidGrAns} can lead us to the explicit construction
(matrix elements, orthonormal basis, character formulas) of
Fock-like modules for mixed parafields. Of course such questions
inevitably involve questions of pure mathematical interest, such
as the possible quasitriangular structures (and thus the possible
braidings) for the $\mathbb{C}(\mathbb{Z}_{2} \times
\mathbb{Z}_{2})$ group Hopf algebra, which up to our knowledge
have not yet been solved. (see \cite{Scheu1} for a relevant
discussion).

Let us close this discussion with some comments regarding the
generalization of Green's ansatz presented in
\seref{Generalizations}. An obvious question which arises at first
glance is the study of the representations of the $\Gamma(2)
\underline{\otimes} ... \underline{\otimes} \Gamma(2)$
superalgebras and specifically the determination of whether they
can help us ``build" -through the homomorphism
\eqref{Greengenerhom}- essentially new representations of the
parabosonic algebra. Of course the relation -if any- of the
$\Gamma(2)$ superalgebra with the $B \underline{\otimes} B$ or
even the $End({}_{B}F \otimes {}_{B}F)$ superalgebras (or any of
it's subalgebras) plays an essential part in answering this
question and should be the starting point of such an
investigation. More generally, we have already mentioned that the
general solution to the problem of the construction of the
\emph{commutation relations specific to given order $p$} or
equivalently the determination of the $\Gamma(p)$ algebra for an
arbitrary positive integer value of $p$ is yet an unsolved
problem. In \cite{OhKa} solutions to the problem are conjectured
for the cases of $p = 2, 3$. But a general solution to this
problem does not exist in the bibliography. Provided such a
general solution, it is easy to see that our method will then
provide a whole family of generalizations to the Green's ansatz.
They will be of the form:
$$
\Gamma(p) \underline{\otimes} ... \underline{\otimes} \Gamma(p)
$$
with $q$ factors of $\Gamma(p)$ appearing on the above braided
($\mathbb{Z}_{2}$-graded) tensor product (Note that in this case
$q$ is an arbitrary positive integer irrelevant of the value of
$p$). Homomorphisms of the form
$$
(f_{p} \otimes ... \otimes f_{p}) \circ
\underline{\Delta}^{(q-1)}: P_{B} \rightarrow \Gamma(p)
\underline{\otimes} ... \underline{\otimes} \Gamma(p)
$$
will provide the suitable link between the representations
theories of $\Gamma(p) \underline{\otimes} ... \underline{\otimes}
\Gamma(p)$ and the parabosonic algebras as long as suitable
generalizations of the \prref{selfcontainedsets} are stated.


\begin{thebibliography}{99}

\bibitem{AnPop} B. Aneva, T. Popov, ``\emph{Hopf structure and Green ansatz of deformed
parastatistics algebras}", J. Phys. A: Math. Gen., v.\textbf{38},
(2005), p.6473-6484.

\bibitem{Che} Chevalley, ``Algebraic Theory of Spinors and Clifford
 Algebras", collected works, v.\textbf{2}, Springer, 1997

\bibitem{CohMon} M. Cohen, S. Montgomery, ``\emph{Group graded rings, smash
products and group actions}", Trans. of the Amer. Math. Soc.,
v.\textbf{282}, 1, (1984), p.237-258

\bibitem{FockCook} J. M. Cook, ``\emph{The mathematics of second
quantization}", Trans. of the Amer. Math. Soc., v.\textbf{74}, 2,
(1953), p.222-245.

\bibitem{Dri} V. G. Drinfeld, ``Quantum Groups", Proc. Int. Cong.
Math., Berkeley, (1986), p. 789-820

\bibitem{Ehrenf} P. Ehrenfest, Zeits. f. Physik, v.\textbf{45},
(1927), p.455

\bibitem{Pal} A. Ch. Ganchev, T.D. Palev, ``\emph{A Lie superalgebraic
interpretation of the parabose statistics}", J. Math. Phys.,
v.\textbf{21}, 4, (1980), p.797-799

\bibitem{Green} H.S. Green, ``\emph{A generalized method of field
quantization}", Phys. Rev., v.\textbf{90}, 2, (1953), p.270

\bibitem{GreeMe} O.W. Greenberg, A.M.L. Messiah, ``\emph{Selection rules
for parafields and the absence of paraparticles in nature}", Phys.
Rev., v.\textbf{138}, 5B, (1965), p.1155

\bibitem{Heis} W. Heisenberg, Zeits. f. Physik, v.\textbf{43},
(1927), p. 172

\bibitem{KaDa1} K. Kanakoglou, C. Daskaloyannis, ``\emph{Graded structure
and Hopf structures in parabosonic algebra. An alternative
approach  to bosonisation}", at: New Techniques in Hopf Algebras
and Graded Ring Theory, edited by: S. Caenepeel, F. Van Oystaeyen,
preprint: arXiv:0706.2825v1 [math-ph]

\bibitem{Ko} B. Konstant, ``\emph{Graded manifolds, graded Lie theory
and prequantization}", Lect. Notes in Math., v.\textbf{570},
Springer, (1977), p.177-306

\bibitem{LiStVdJ} S. Lievens, N. I. Stoilova, Van der Jeugt,
``\emph{The paraboson Fock space and unitary irreducible
representations of the Lie superalgebra $osp(1/2n)$}",
arXiv:0706.4196v1 [hep-th].

\bibitem{Maj2} S. Majid, ``Foundations of Quantum Group Theory",
Cambridge University Press, 1995

\bibitem{Maj3} S. Majid, ``A quantum groups primer", London
Mathematical Society, Lecture Note Series, 292, Cambridge
University Press, 2002.

\bibitem{MiMo} J. Milnor, J. Moore, ``\emph{On the structure of Hopf
algebras}", Ann. of Math., v.\textbf{81}, (1965), p.211-264.

\bibitem{Mon} S. Montgomery, ``Hopf
algebras and their actions on rings", CBMS, Regional Conference
Series in Mathematics, 82, AMS-NSF, 1993.

\bibitem{OhKa} Y. Ohnuki, S. Kamefuchi, ``Quantum field theory and
parastatistics", University of Tokyo press, Tokyo, Springer, 1982.

\bibitem{OmOhKa} M. Omote, Y. Ohnuki, S. Kamefuchi, ``\emph{Fermi-Bose
similarity}", Prog. Theor. Phys., v.\textbf{56}, 6, (1976),
p.1948-1964.

\bibitem{Pal3} T.D. Palev, ``\emph{Algebraic structure of Green's ansatz
and it's q-deformed analogue}", J. Phys. A: Math. Gen.,
\textbf{27}, (1994), p.7373-7385.

\bibitem{Pal6} T.D. Palev, ``\emph{Parabose and parafermi operators as
generators of orthosymplectic Lie superalgebras}", J. Math. Phys.
\textbf{23}, 6, (1982), p.1100-1102.

\bibitem{Put} C. R. Putnam, ``Commutation Properties of Hilbert
Space Operators and Related Topics", Springer-Verlag, Berlin,
(1967).

\bibitem{Ques} C. Quesne, ``\emph{Interpretation and extension of
Green's ansatz for paraparticles}", Phys. Lett. A, \textbf{260},
(1999), p.437-440.

\bibitem{ReSi} M. Reed and B. Simon, ``Methods of Modern
Mathematical Physics", vol.I, Academic Press (1975).

\bibitem{RiNa} F. Riesz and B. Sz. Nagy, ``Functional Analysis",
Frederick Ungar Pub. Co., N.Y. (1955).

\bibitem{Rob} S. Robbins, ``\emph{The uniqueness of the one dimensional
paraboson field}", Trans. of the Amer. Math. Soc., v.\textbf{209},
(1975), p.389-397 / ``\emph{Paraboson uniqueness for infinitely
many degrees of freedom}", J. of Math. Physics, v.\textbf{18}, 5,
(1977), p.997-1005.

\bibitem{Scheu} M. \ Scheunert, ``\emph{The theory of Lie superalgebras}",
 Lect. Not. Math., v.\textbf{716}, Springer, (1978), p.1-270.

\bibitem{Scheu1} M. \ Scheunert, ``\emph{Universal $R$-matrices
for finite abelian groups - a new look at graded multilinear
algebra}", arXiv:q-alg/9508016v1

\bibitem{Schroed} D. Schr\"{o}edinger, Ann. d. Physik,
v.\textbf{79}, p.361, p489, v.\textbf{80}, p.437, v.\textbf{81},
p.109 (1926).

\bibitem{Seg} I. E. Segal, ``\emph{Tensor algebras over Hilbert spaces
I}", Trans. of the Amer. Math. Soc., v.\textbf{81}, 1, (1956),
p.106-134 / ``\emph{Tensor algebras over Hilbert spaces II}",
Annals of Math., v.\textbf{63}, 1, (1956), p.160-175.

\bibitem{Stee} N.E. Steenrod, ``\emph{The cohomology algebra of a space}",
 Enseign. Math. II., ser.\textbf{7}, (1961), p.153-178.

\bibitem{Vo} D. V. Volkov, ``\emph{On the quantization of
half-integer spin fields}", Sov. Phys.-JETP \textbf{9}, (1959) p.
1107-1111 / ``\emph{S-matrix in the generalized quantization
method}", Sov. Phys.-JETP \textbf{11}, (1960), p.375-378.

\bibitem{Wint} A. Wintner, ``\emph{The unboundedness of quantum mechanical
matrices}", Phys. Rev., v.\textbf{71}, (1947), p. 738-739.

\bibitem{Ya} W. Yang, Sicong Jing, ``\emph{Graded Lie algebra generating
of parastatistical algebraic structure}", Commun. in theor. phys.
v.\textbf{36}, 6, (2001), p.647-650 / ``\emph{A new kind of graded
Lie algebra and parastatisical supersymmetry}", Science in China
(Series A), v.\textbf{44}, 9, (2001)

\end{thebibliography}
\end{document}